\documentclass[runningheads]{llncs}
\usepackage[T1]{fontenc}
\usepackage{newtxtext}       %
\usepackage[varvw]{newtxmath}       

\usepackage{graphicx}
\usepackage{amsmath}
\usepackage{xcolor}
\usepackage{microtype}
\usepackage{hyperref}
\usepackage{cleveref}
\usepackage{comment}
\usepackage{relsize}
\usepackage{amsfonts}
\usepackage{mathtools}
\usepackage{commath}
\usepackage[numbers]{natbib}
\usepackage{tikz}
\usepackage{algorithm}
\usepackage{algorithmic}
\usepackage{thm-restate}

\spnewtheorem{assumption}{Assumption}{\bfseries}{\itshape}

\newcommand{\shahrzad}[1]{\textcolor{blue!60!black}{\textsc{Shahrzad:} \emph{#1}}}

\newcommand{\setdiff}{\Delta}

\newcommand{\algone}{\textsc{FindTopLog}}
\newcommand{\algtwo}{\textsc{FindTopLogII}}
\newcommand{\algstar}{\textsc{STAoR}}

\newcommand{\proc}{\textsc{DiscardTail}}
\newcommand{\compare}{{\textsc{Compare}}}
\newcommand{\compareii}{{\textsc{CompareII}}}

\newcommand{\stub}{\mathsf{stub}}

\newcommand{\G}{\mathcal{G}}
\newcommand{\Ps}{\mathcal{P}}
\newcommand{\act}{\mathsf{act}}
\newcommand{\sig}{\mathsf{sg}}
\newcommand{\expprec}{\prec_*}
\newcommand{\expsucc}{\succ_*}
\newcommand{\rank}{\mathsf{rank}}
\newcommand{\N}{\mathcal N}

\begin{document}

\title{Social Learning of Ranking: Getting the Top Right despite Local
Herding}
%
%
\author{Shahrzad Haddadan \inst{1}\orcidID{0000-0002-7702-8250} \and
Jie Gao\inst{2}\orcidID{0000-0001-5083-6082} }
\authorrunning{Haddadan and Gao}
%
\institute{Rutgers Business School\\ 
100 Rockafeller Road, Piscataway, NJ 08854.\\
\texttt{shaddadan@business.rutgers.edu}
 \and
Rutgers University, Department of Computer Science \\
110 Frelinghuysen Rd, Piscataway, NJ 08854.\\
\texttt{jg1555@rutgers.edu}}
\maketitle              

\begin{abstract}

In sequential learning, a group of agents sequentially receives a noisy private signal and observes the predictions of prior agents to estimate a hidden ground truth. While extensively studied for binary signals, we formalize and analyze sequential learning when the hidden ground truth is an ordering of $n$ items, and both private signals and inferences involve pairwise comparisons.

A key focus in sequential learning is \emph{herding}, where agents disregard their private signals to follow the crowd. We show that, while local herding in our model may lead agents to make incorrect pairwise inferences, the global hidden order can still be approximated. Specifically, by aggregating all pairwise comparison outputs, the highest-ranked items can be retrieved. We investigate top-$k$ retrieval across three scenarios: (1) passive aggregation, where information is gathered from the social learning process without intervention; (2) strategic selection, where agents choose their pairwise comparisons dynamically; and (3) aggregation with stubborn nodes, where a very small set of \emph{stubborn} agents, who base their inferences solely on their private signals, join the mechanism. With query complexity as the number of agents needed for a high-probability estimation, we propose top-$k$ retrieval algorithms with query complexities sublinear in $O(n^2)$, the total number of pairs.
\end{abstract}

\section{Introduction}

Decision-making is almost always in a social setting. People make decisions by using their personal experiences and data, as well as observations of the actions and predictions of others. The topic of social learning~\cite{Golub2017-qo,Acemoglu2011-vj} concerns how decision-making occurs in a social environment and whether such decisions approach the truth. The classical model of sequential learning~\cite{Golub2017-qo,Mobius2014-oy} considers a list 
of agents trying to infer an unknown ground-truth variable. Each agent receives a private signal with a constant probability of error, and then announces her prediction. They are incentivized to correctly predict the ground truth. To do that, an agent will use her own signal as well as the predictions from agents that acted earlier. 

One main issue in sequential learning, as observed in numerous prior works, is the possibility of \emph{herding}~\cite{Banerjee1992-ra,Bikhchandani1992-rs,Welch1992-yt,Smith2000-wk,Chamley2004-or}. If the first few agents, unfortunately, have incorrect private signals, their incorrect predictions can lead all remaining agents to ``follow the herd'' and ignore their own signals. Thus, group behavior and decisions can be solely driven by the first few agents, despite abundant signals spread across the population. Numerous subsequent works have studied how herding can be mitigated by changing the network topology~\cite{Acemoglu2011-vj}, altering decision ordering~\cite{lu24enabling}, inserting a few stubborn nodes~\cite{Smith1991-sy,Sgroi2002-rz}, and so on.

The majority of the literature on social learning considers the ground truth as a simple binary variable. But social learning also occurs in more complex forms of data and knowledge. With the rapid adoption of AI models and autonomous agents, it is crucial to examine how agents share their predictions and whether their inferences and predictions can approach or converge on the truth.  In this paper, we consider social learning of an underlying ground-truth ranking of $n$ items. And we would like to examine whether the agents can learn the ground-truth ranking or approximate  top-$k$ elements in the ranking. 

Discovering the correct ranking of items with noisy measurements~\cite{Niu13stochastic}, incomplete data~\cite{Fahandar17statistical}, or bounded sampling complexity~\cite{Ren19sample} is a fundamental problem with numerous applications in e-commerce~\cite{Hayetal2017}, recommendation systems, search engines, sports analytics and tournaments~\cite{DRUGOV2020105065}, voting and epistemic social choice, and policy surveys. In this domain, the dominant assumption is that signals arrive as (noisy) pair comparisons, and the goal is to derive a ranking that is ideally close to the measurements. 

In this paper, we consider a sequential learning scenario for an underlying ranking of $n$ items $\{a_i\}_{i=1}^n$, where each agent has a private noisy observation of the comparison of some pair $a_i, a_j$. Again, the agent can hear the predictions from previous agents and will announce its prediction of the relative ordering of this pair. Our goal is to determine whether the last agent in the sequence, using all announced predictions, can infer the top-$k$ elements of the underlying ranking and how accurate this inference is. Similar to the simple setting, for any fixed pair $a_i, a_j$, among all the agents who compare this pair, herding may happen, after which agents effectively ignore their private signals, and no longer contribute to the estimation of the true underlying order. Our main question is,
\begin{quote}
\em Can the information embodied in the sequential learning process be used to approximate the true underlying order and, in particular, the top-$k$ elements? 
\end{quote}



\medskip\noindent\textbf{Related Work.}
Our setting is closely related to the literature on ``rank aggregation from pairwise comparisons''. Given an underlying true permutation, these works focus on designing algorithms to reconstruct a hidden ranking from pairwise comparisons. A fundamental assumption in these settings is that each comparison outcome is \emph{independent}. Therefore, by repeatedly querying the relationship between a pair of items, one may improve estimation accuracy. This key assumption does not generally hold in our social learning framework, where agents observe and incorporate the inferences of others into their own learning process.

Several works in the  literature address the problem of recovering the top-$k$ elements in the hidden order under the assumption of a Bradley-Terry-Luce (BTL) noise model \cite{pmlr-v28-busa-fekete13,negahban12}. 
Notably, Chen et al. \cite{pmlr-v37-chena15} developed the ``Spectral MLE'' approach, providing a rigorous sample complexity analysis and established the minimax tightness of their algorithm. Furthermore, several works have studied this problem under the more general Plackett-Luce (PL) model, where noisy observations reveal the highest-ranked item from a subset of offered alternatives \cite{pmlr-v89-saha19a,pmlr-v119-saha20b}. 
In addition to these utility-based rank models, another prominent line of work focus on the problem of learning the center of a Mallows distribution by observing pairwise comparisons \cite{lu2011learning,vitelli2018probabilistic,liu2018efficiently,2969033.2969118}.
Some works consider an \emph{active learning} paradigm, where the algorithm dynamically selects pairs to compare instead of relying on passive, random sampling \cite{pmlr-v70-mohajer17a,haddadan2026generalized,ren2020sample}.



Beyond parametric frameworks, several works study more general, non-parametric noise models, for instance many works only assume  Strong Stochastic Transitivity (SST) \cite{shah2016stochastically,lou2022active,pananjady2020worst,ren2020sample}.  Shah and Wainwright \cite{shah2018simple} provided a significant contribution by analyzing the classical Copeland count algorithm for top-$k$ item retrieval within a completely distribution-free setting. They derive sharp sample complexity bounds for estimation accuracy that depend directly on the probability gap between the $k$-th and $(k+1)$-st items. Furthermore, they establish matching lower bounds to demonstrate that no other method can achieve a strictly lower sample complexity.

A closely related line of work studies \emph{sorting without resampling}. 
Similar to the social learning paradigm, in this setting  repeated 
comparisons of the same pair return the same outcome. Braverman and Mossel \cite{BM2008} initiated this line of work, obtaining approximate rankings for any fixed positive comparison bias. Subsequent algorithms \cite{klein2011tolerant,geissmann2017sorting,geissmann2019optimal} improve computational efficiency but some impose stronger upper bounds on the error probabilities. In contrast to these works, our objective is top-$k$ retrieval rather than approximate sorting of the entire list: we seek a small candidate set containing all true top-$k$ items with high probability. Under strong stochastic transitivity, our algorithm accommodates arbitrarily small constant positive bias, without requiring an additional small-error threshold. Another important distinction of our algorithms and these works is its simplicity. Our more targeted objective, enables us to develop a   simple sampling-and-elimination procedure, avoiding the insertion-based dynamic programming of Braverman–Mossel \cite{BM2008} and the search-tree construction and merging procedures of RiffleSort \cite{geissmann2019optimal}. Despite this algorithmic simplicity, our analysis is nontrivial: the surviving candidate sets depend on previously observed persistent comparison outcomes, creating selection bias that must be controlled across successive elimination rounds.
 We will present a more detailed comparison of our work with these works in \Cref{sec:global}.


Another related line of research leverages opinion dynamics to analyze how network topology influences learning accuracy among agents seeking to infer a common ground truth \cite{mossel2014majority,olfati2007consensus,yildiz2011discrete,repec:jas,3692070.3692752}. Unlike traditional sequential learning models, these frameworks focus on iterative, network-wide opinion exchange over complex graph structures.

\medskip 
\noindent\textbf{Summary of Contributions.}
We formalize a sequential learning paradigm where the hidden ground truth is an ordering of $n$ items. Given a pair of items selected uniformly at random by nature, a learning agent receives a noisy private signal that ranks the pair and, by observing the predictions of prior agents, makes her own prediction to rank them. 
This framework generalizes the classical rank aggregation literature
to a social learning environment in which pairwise observations have complex structures.

We analyze both \emph{local} and \emph{global} herding, which we define with respect to a specific pair or across all pairs, respectively. We prove that both local and global herding emerge very early in the sequential learning process. Despite this, we demonstrate that it remains possible to aggregate potentially incorrect pairwise predictions to approximate the true hidden order.
More specifically, we develop tailored algorithms for top-$k$ retrieval across three distinct scenarios:

\begin{enumerate}
\item \emph{Passive Aggregation:} When information is collected purely by observing the un-intervened social learning process, we show how to leverage results from the sorting-without-resampling literature to recover a top-$k$ list for $k \in \Theta(\log n)$, with sample complexity (the number of agents needed) to be $O_\gamma(n^2\log n)$.

\item \emph{Sublinear algorithms using Strategic Selection:} When an algorithm can dynamically select which pairwise comparisons to present to the agents, we devise two algorithms based on simple sampling-and-elimination procedure with \emph{sublinear} query complexity  for two regimes: $k \in \Theta(\log n)$ and $k \in \Theta(\log \log n)$, achieving success probabilities of $1 - O(1/n)$ and $1 - O(1/\log n)$, respectively. We complement this by proving that this trade-off between the size of $k$ and the success probability is fundamental when restricted to regular learning agents.
\item \emph{Stubborn Participation:} We analyze a scenario where we can inject ``stubborn'' agents who base their predictions solely on their private signals, thereby breaking herding cascades. By strategically querying these agents, we design a top-$k$ retrieval algorithm that operates in the most challenging regime where $k \in \Theta(1)$.
\end{enumerate}



\section{The Model of Social Learning and Local Herding}
Consider a set of $n$ items, $\N=\{a_1,a_2,\dots, a_n\}$, that are linearly ordered. We study a sequential learning scenario in which a sequence of learning agents $\{v_t\}_{t=1}^\infty$, who are initially unaware of the true ordering, aim to learn it using both private signals from the environment and the publicly observable actions of preceding agents. At each time step $t=1,2,\dots$, nature selects a pair of distinct indices $i,j\in [n]$ uniformly at random, and agent $v_t$ is asked to compare $a_i$ and $a_j$ according to her current estimate of the underlying order. To make this decision, the agent receives a \emph{noisy private signal} from the environment and observes the history of actions made by all the preceding agents who compared the same pair. Based on this information, the agent outputs one of the two possible comparisons, $a_i\succ a_j$ or $a_j\succ a_i$, which constitutes her \emph{action}.

\medskip\noindent\textbf{Noisy Private Signals}
Without loss of generality, we assume that the true order is $a_1\succ a_2\succ \dots \succ a_n$. 
For any $a \in \mathcal{N}$, we let $\rank(a)$ denote its position in this ordering, i.e., $\rank(a_i)=i$. We denote the private signal received by agent $v_t$, conditional on nature sampling the pair $(i,j)$, as ${\sig}(v_t\mid i,j)$, and as $\sig(v_t)$ without the condition.

We assume that the noisy private signal depends only on the sampled pair $i,j$. For any $a_i,a_j\in \N$ where $i<j$ (meaning $a_i\succ a_j$), the signal probabilities are given by:
\[
\mathbb{P}\left(\sig(v_t\mid i,j)=a_i\succ a_j\right)=p_{ij}>1/2\quad \text{ and } \ \mathbb{P}\left(\sig(v_t\mid i,j )=a_i\prec a_j \right)=p_{ji}=1-p_{ij}~.
\]
We denote the $n\times n$ matrix of private signals probabilities by $\Ps$, where $\Ps(i,j)=p_{ij}$ for $i,j\in [n]$.
We assume that there exists a constant $0<\gamma<1/2$ such that, for all $i<j$ (i.e., $a_i\succ a_j$),
$
p_{ij}>\frac{1}{2}+\gamma$.
In some of our results, we impose the following natural assumption on the probability matrix $\Ps$.

\begin{assumption}[Strong Stochastic Transitivity]\label{assumption}
We assume that the private-signal probability matrix $\Ps$ satisfies for any $i<j\in [n]$, $p_{i\kappa}\ge p_{j \kappa}>0$ for all $\kappa\neq i,j$. 
In other words, 
if an item $i$ proceeds another item $j$ in the ordering, $i$ is more likely to defeat any other item $\kappa$ compared with $j$.
\end{assumption}


Assumption~\ref{assumption} is sufficiently general to capture several widely studied noise models. First, our algorithms apply to the constant-error model, where $p_{ij}=1/2+\gamma$ for every pair $i<j$. Second, they are valid in the Bradley-Terry-Luce (BTL) model, provided that the item utilities are separated by a positive constant; in this case, $\gamma$ can be chosen as the minimum pairwise comparison advantage induced by those utilities. Finally, the assumption holds under the Mallows model~\cite{mallows:biometrika57} with discrepancy parameter $\beta>0$, where one may set $\gamma=\frac{e^{\beta}-1}{2(1+e^\beta)}$.



\medskip\noindent\textbf{Agent Actions}
The action of agent $v_t$ is defined analogously as $\act(v_t\mid i,j)$ and $\act(v_t)$, respectively.
We assume that the agents are fully rational. Thus, each agent would predict the ordering of $a_i, a_j$ of highest likelihood in a Bayesian manner using her private signals and the actions of earlier agents who compared the same pair. 
When there is a tie -- i.e., the likelihood of both options are exactly the same -- we assume that the agent leans towards her own private signals.

Fix a pair $a_i, a_j$ and consider only the agents that are asked to compare them in the sequence. The $k$-th agent can see her own private signal as well as the actions (predictions) of the previous $k-1$ agents. 
If an agent can see the \emph{private} signals of the previous agents, then the best action is to report the majority vote from these private signals, whose failure probability drops as an exponential in $k$. As $k$ increases, the agents will get a very accurate prediction. Nevertheless, the challenge here is that the later agents can only see the \emph{actions} from the previous agents and must apply Bayesian inference on what is the best to report. There are several scenarios: 
\begin{itemize}
    \item The first agent will report exactly the same as the private signal. Same for the second agent, which is obvious if her private signal matches the report from the first agent. If her private signal contradicts the report from the first agent, by the tie-breaking rule, the report equals the second agent's private signal.
    \item For the third agent, there can be only two scenarios. In the first scenario, the previous two agents report opposite answers. Based on the analysis above, their private signals differ as well. Thus, the third agent will report her own private signal (which aligns with one of the previous agents). In the second case, the previous two agents report the same answer, then the third agent, using Bayesian analysis, will report the same signal as the previous agents, regardless of her own private signal. When this happens, all remaining agents will do the same thing and ignore their own signals. This is called an information cascade. 
    \item 
    This analysis generalizes to the $k$-th agent for any $k \geq 3$. If the absolute difference between the number of prior reports favoring $a_i \succ a_j$ and those favoring $a_j \succ a_i$ is at most one, the $k$-th agent will simply report their own private signal. If the margin is exactly \emph{two}, the agent disregards her private signal and follows the majority. If the margin reaches \emph{three or more}, the agent recognizes that the identical sequence of recent predictions implies a herd formation, meaning subsequent agents merely copied the crowd. Yet, even after filtering out these redundant actions, the remaining history still maintains a margin of two. Consequently, the agent is still compelled to ignore her private signal and follow the crowd. By induction, this herding cascade persists for all subsequent agents.
\end{itemize}


For any given $i, j \in [n]$, let $H_{i, j}(<t)$ denote the full sequence of actions taken by agents comparing $a_i$ and $a_j$ prior to time $t$. Furthermore, we denote the number of prior reports up to time $t$ favoring $a_i \succ a_j$ as $H_{a_i \succ a_j}(<t)$, and those favoring $a_i \prec a_j$ as $H_{a_i \prec a_j}(<t)$. Based on our previous discussion, the first step at which an agent disregards their private signal to follow the crowd occurs precisely when the absolute difference between these two tallies reaches a threshold of two, i.e.,
\begin{equation}\label{eq:herdingcond}
    \left| H_{a_i\succ a_j}(<t)- H_{a_i\prec a_j}(<t)\right| = 2~.
    \tag{Herding Condition}
\end{equation}   

Let $\tau^*(i,j)$ denote the first time step at which \ref{eq:herdingcond} is satisfied. Beyond this point, although subsequent agents continue to receive private signals and the raw sequence of observed actions $H_{i,j}(<t)$ continues to grow, the actual informational tallies freeze,  i.e., 
for any $t\geq \tau^*(i,j)$, we have,  $H_{a_i\succ a_j}(<t) = H_{a_i\succ a_j}(<\tau^*(i,j))\text{, and }H_{a_i\prec a_j}(<t) = H_{a_i\prec a_j}(<\tau^*(i,j))$.
After $\tau^*(i, j)$, herding behavior dictates that all agents comparing pair $(a_i, a_j)$  rank them in the same direction. When this occurs, we say that the edge  $(a_i, a_j)$ is \emph{fixed}, denoted by $a_i \expsucc a_j$ or $a_i \expprec a_j$, where $\expsucc$ and $\expprec$ represent the fixed preferences. We define $p^*_{ij}$ as the probability that the pair $(a_i, a_j)$ is fixed as $a_i \expsucc a_j$, and $p^*_{ji}$ as the probability that it is fixed in the opposite direction, $a_i \prec^* a_j$ ,
i.e., 
\[
p^*_{ij}\doteq \mathbb{P}(a_i\expsucc a_j) \text{, and } p^*_{ji}\doteq \mathbb{P}(a_i\expprec a_j)=1-p^*_{ij} ~.
\]
The following lemma establishes the relationship between the private signal probabilities and the probabilities of fixed edges. Furthermore, it demonstrates that any given edge will become fixed after a polynomial number of steps with high probability.

\begin{restatable}{lemma}{localherding}
\label{lem:herding:regular}
Consider a pair $i,j\in [n]$. The probability that  through  social learning the edge $(a_i,a_j)$ will be fixed as $a_i\expsucc a_j$ is:
\begin{equation}\label{eq:fixprob}
p^*_{ij}=1/\left(1+\zeta_{ij}^2\right); \text{ where } \zeta_{ij}= (1-p_{ij})/p_{ij}~.
\end{equation}
Let $\tau^*(i,j)$ be the smallest number of steps (learning agents) after which  $(a_i,a_j)$ gets fixed. 
Then we have: 
$\mathbb{E}[\tau^*(i,j)]\in\Theta(n^2)$.
\end{restatable}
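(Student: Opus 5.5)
Restricted to the agents that are asked to compare $a_i$ and $a_j$, the informational tally $D_m = H_{a_i\succ a_j}-H_{a_i\prec a_j}$ after the $m$-th such agent behaves like a random walk. By the case analysis preceding \ref{eq:herdingcond}, as long as $|D|\le 1$ each agent reports her private signal. So $D$ moves $+1$ with probability $p_{ij}$ and $-1$ with probability $1-p_{ij}$, independently of the past, and it is absorbed once it first reaches $\pm 2$. The event $a_i\expsucc a_j$ is exactly absorption at $+2$. The plan is therefore to (1) solve the gambler's-ruin problem for this walk to get $p^*_{ij}$, and (2) bound the absorption time in comparisons and convert it to global time steps.

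For (1), let $q=1-p_{ij}$ and $p=p_{ij}$, and let $h(x)$ be the probability of hitting $+2$ before $-2$ starting from $x$. Then $h(2)=1$, $h(-2)=0$, and $h(x)=p\,h(x+1)+q\,h(x-1)$ for $|x|\le 1$. The standard solution gives $h(0)=\frac{1-\zeta^{2}}{1-\zeta^{4}}=\frac{1}{1+\zeta^2}$ with $\zeta=q/p$. A more direct route: absorption happens only after an even number of steps. Every two steps the walk sits at $0$, or moves to $+2$ with probability $p^2$, or to $-2$ with probability $q^2$. Conditioned on absorption, the probability of $+2$ is $p^2/(p^2+q^2)=1/(1+\zeta_{ij}^2)$. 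I will present this two-step argument, since it is cleanest.

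For (2), the same two-step decomposition shows that the number $M$ of comparisons of the pair until fixing is $2G$, where $G$ is geometric with success probability $p^2+q^2\in[1/2,1)$. Hence $\mathbb{E}[M]=2/(p^2+q^2)\in[2,4]$, a constant, and $M\ge 2$ always. Each global step samples the pair $\{i,j\}$ independently with probability $\pi=1/\binom n2$. So $\tau^*(i,j)$ is the time of the $M$-th occurrence of the pair, which is a sum of $M$ i.i.d.\ geometric($\pi$) gaps. The number of gaps $M$ is independent of the gap lengths, since the private signals are independent of nature's pair choices. Wald's identity then gives $\mathbb{E}[\tau^*(i,j)]=\mathbb{E}[M]/\pi=\frac{2}{p^2+q^2}\binom n2\in\Theta(n^2)$. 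The explicit bounds are $n(n-1)\le \mathbb{E}[\tau^*]\le 2n(n-1)$.

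The main obstacle is making rigorous that, while $|D|\le1$, each agent's action equals her private signal and that this is independent across agents. This requires the Bayesian claim in the text that an agent facing margin at most one follows her signal, using the tie-breaking rule at margin exactly one. I would justify it by induction: all reports before herding are signal-revealing, so the public posterior is determined by $D$. At $|D|\le 1$, one private signal either ties or outweighs the public evidence. At $|D|=2$, the public evidence strictly dominates a single opposing signal. I take this, as argued before the lemma, as given; the remaining calculations are routine.
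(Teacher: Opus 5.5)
Your proposal is correct and follows essentially the same route as the paper: you restrict to the $\pm1$ walk on the agents comparing the pair, use gambler's ruin (via your two-step decomposition) for $p^*_{ij}$, and apply Wald's identity with i.i.d.\ geometric gaps of mean $\binom n2$ to get $\Theta(n^2)$. Your version is slightly sharper. It gives the exact expected number of comparisons $2/(p_{ij}^2+(1-p_{ij})^2)\in(2,4]$, uniformly in $p_{ij}$, and it states explicitly the independence that Wald's identity needs. The paper instead quotes the duration as $\Theta\bigl(1/(2p_{ij}-1)\bigr)$, which is only harmless because $\gamma$ is a constant.
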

\begin{proof}
Let $X_{ij}$ be a random variable defined as $X_{ij}(t)=H_{a_i\succ a_j}(<t)-H_{a_j\succ a_i}(<t)$. 
The edge will be fixed as $a_i\expsucc a_j$ the first time $X_{ij}=+2$ and it will be fixed as $a_i\expprec a_j$ the first time $X_{ij}=-2$.
Let $Y_{ij}(t)$ be a random walk starting at $Y_{ij}(0)=0$, and evolving as:
$Y_{ij}(t)=
Y_{ij}(t-1)+1$  w.p. $ p_{ij}$, and 
$Y_{ij}(t)=Y_{ij}(t-1)-1$  otherwise.
Note that $Y_{ij}$ evolves as $X_{ij}$ if we filter out the steps that $i,j$ is not sampled by nature. Thus, $p^*_{ij}$ is the probability that $Y_{ij}$ hit $+2$ before $-2$. Following the classic analysis of the gambler ruin problem 
we may conclude \cref{eq:fixprob}.

Define $\tau^*(i,j)$ the first time $X_{ij}$ hits $+2$ or $-2$. We may write 
 $\tau^*(i,j)=T_1+T_2+\dots +T_r$,
where $r$ is a random variable as the number of steps $Y_{ij}$ hits $+2$ or $-2$ and each $T_\kappa$ is the number of steps that nature takes until picking pair $i,j$ another time. Note that all $T_\kappa$ are distributed identically and they are the stopping time of a geometric random variable with probability $1/\Theta(n^2)$ of success.
Using Wald's equation we have $ \mathbb{E}[\tau^*(i,j)]=\mathbb{E}[r]\cdot\mathbb{E}[T_\kappa]$. 
From the gambler ruin classic results  
and using $\mathbb{E}[T_\kappa]=\Theta(n^2)$, we can write: 
$  \mathbb{E}[\tau^*(i,j)]=\mathbb{E}[r]\cdot\mathbb{E}[T_\kappa]\in \Theta\left(\frac{n^2}{2\max\{p_{ij},1-p_{ij}\}-1}\right)$ which is $\Theta(n^2)
$.
\end{proof}

How likely is it that a given pair $(a_i,a_j)$ is fixed in the \emph{wrong direction}? From the above lemma, we can immediately 
conclude that, for any pair $i<j\in[n]$ with true ordering $a_i \succ a_j$, if the private-signal probability $p_{ji}$ is a constant, then with constant probability, after $\Theta(n^2)$ steps, all subsequent agents comparing $(a_i,a_j)$ herd on the incorrect ordering and rank the pair as $a_i \expprec a_j$, despite their own private signals.
Thus, when considering a specific pair, our model behaves  similarly to the  classical sequential learning models with binary signals: with constant probability, herding causes all future agents to converge to and perpetuate an incorrect inference.

\section{Top-$k$ Inference Despite Herding}

Herding poses a fundamental challenge to reliable information aggregation in sequential learning systems. For a specific pair, despite abundant private signals within the agent population, herding prevents further information aggregation to reduce the uncertainty of actions. In this section, we study whether effective aggregation can still be made by harvesting the noisy comparison information to infer the top-$k$ elements in the unknown ground-truth ranking.

Given $k \in [n]$, the goal is to (approximately) recover the $k$ highest-ranked elements in the true ordering, namely $\{a_1, a_2, \dots, a_k\}$, which we denote by $\top(k)$. Note that smaller values of $k$ correspond to higher precision in identifying the highest ranked element (the \emph{winner}), whereas larger values of $k$ provide only a coarse approximation to the set of top-ranked elements.

We consider multiple settings with an increasing level of intervention and more accurate/efficient solutions. Throughout our analysis, we provide bounds on the number of agents participating in the social learning mechanism required to achieve a target outcome. We refer to this quantity as the number of steps in the sequential learning process (\Cref{sec:global}). When evaluating the performance of our algorithms, we denote this quantity as the query complexity (\Cref{sec:Alg:sublinear,sec:Alg:private}). We start (in \Cref{sec:global}) with a baseline mechanism where nature randomly picks which pair an agent will observe. Then we show (in \Cref{sec:Alg:sublinear}) that if an agent can decide which pair to observe but cannot control the error probability of the private observation, we can significantly reduce the sample complexity for inferring the top-$k$ elements. Finally, we show (in \Cref{sec:Alg:private}) that global herding can be destroyed by a very small number of stubborn nodes who refuse to participate in the Bayesian inference policy but respond solely from their private signals.



\subsection{Global Herding and Top-$k$ Inference}\label{sec:global}

In the previous section, we showed that for any pair $i,j\in[n]$, local herding on the pair $(a_i,a_j)$ occurs after $\Theta(n^2)$ steps in expectation. That is, beyond this point, additional private signals no longer help learning agents infer the true relative order of $a_i$ and $a_j$.
We define \emph{global herding} as the event in which for all pairs $i,j\in [n]$ the edge $(a_i,a_j)$ has been fixed. 
We first ask how long it takes for global herding to occur. The following theorem answers this question.



\begin{theorem}\label{thm:globalherding}
Let $T^*$ be the minimum number of steps such that after $T^*$ for all $i,j\in [n]$  the pair $(a_i,a_j)$ has been fixed. With probability at least $1-\delta$ we have: 
$
T^*\in O(n^2\ln(n/\delta))$.
\end{theorem}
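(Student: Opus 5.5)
Our plan is to bound the fixation time of a single pair with a tail bound, and then take a union bound over all $\binom{n}{2}$ pairs. Fix a pair $(i,j)$ and use the decomposition from the proof of \Cref{lem:herding:regular}. Let $Y_{ij}$ be the embedded random walk that moves only when nature samples $(i,j)$, and let $r$ be the number of its steps until it hits $\pm 2$. Then
\[
\tau^*(i,j)=T_1+\dots+T_r ,
\]
where the $T_\kappa$ are i.i.d.\ geometric with success probability $q=1/\binom{n}{2}$. They are independent of the signals and hence of $r$.

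\textbf{Tail bound on $r$.} Split the walk into consecutive blocks of two steps. Starting from any state in $\{-1,0,1\}$, two steps in the same direction certainly reach $\pm 2$. This happens with probability at least
\[
p_{ij}^2+(1-p_{ij})^2\ge 1/2 .
\]
Blocks are independent, so $\mathbb{P}(r>2m)\le 2^{-m}$. Consequently $r\le 2\log_2(2n^2/\delta)$ with probability at least $1-\delta/(2n^2)$. In fact, from state $0$ every two steps either end the walk or return it to $0$, which makes the geometric structure exact.

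\textbf{Counting occurrences of the pair.} Rather than summing geometrics, we count how many times nature samples $(i,j)$ among the first $T$ steps. Call this count $N_{ij}(T)\sim\mathrm{Bin}(T,q)$. Pair $(i,j)$ is fixed by time $T$ whenever $N_{ij}(T)\ge r$. Choose
\[
T=C n^2\ln(n/\delta)
\]
for a large constant $C$. Then $\mathbb{E}[N_{ij}(T)]=Tq\ge 2C\ln(n/\delta)$. A multiplicative Chernoff bound gives $N_{ij}(T)\ge C\ln(n/\delta)$ with probability at least
\[
1-\exp\bigl(-C\ln(n/\delta)/4\bigr)\ge 1-\delta/(2n^2)
\]
once $C$ is large enough. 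Since $C\ln(n/\delta)\ge 2\log_2(2n^2/\delta)$ for large $C$, outside an event of probability $\delta/n^2$ we have $N_{ij}(T)\ge r$, and the pair is fixed by time $T$. Once fixed, a pair stays fixed, so "fixed by time $T$" is the right event to bound.

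\textbf{Union bound.} Summing the failure probability over the fewer than $n^2/2$ pairs gives total failure probability at most $\delta/2\le\delta$. Hence $T^*\le T=O(n^2\ln(n/\delta))$ with probability at least $1-\delta$.

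\textbf{Main obstacle.} The main point to get right is that the tail on $r$ must hold uniformly in $p_{ij}$, including $p_{ij}$ close to $1$ or close to $1/2$. The two-step block argument handles this, since $p_{ij}^2+(1-p_{ij})^2\ge 1/2$ for every $p_{ij}$. It also matters that the counts $N_{ij}$ and the walks $Y_{ij}$ are independent for a fixed pair. Dependence across different pairs is irrelevant because of the union bound.
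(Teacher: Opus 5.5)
Your proof is correct. It has the same skeleton as the paper's: an embedded walk on $\{-2,\dots,2\}$ for each pair, a $\mathrm{Bin}(T,1/\binom n2)$ count of visits to that pair, a per-pair tail bound, and a union bound over pairs. You differ only in how you obtain the per-pair tail.

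\textbf{The paper's route.} It computes the exact survival function of the embedded walk, $S_y=\eta^{2\lfloor y/2\rfloor}$ with $\eta=\sqrt{2p(1-p)}$, via the recurrence $S_y=\eta^2S_{y-2}$. It then averages $S_y$ over the binomial visit count in one step, using $\mathbb{E}[x^Y]=(1-(1-x)/m)^T$. This gives the explicit failure bound $\tfrac12(1+1/\eta)(1-(1-\eta)/m)^T$ and hence explicit constants in $T$.

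\textbf{Your route.} You decouple the two sources of randomness. You use a geometric tail $\mathbb{P}(r>2M)=(2p_{ij}(1-p_{ij}))^M\le 2^{-M}$ for the absorption time, and a multiplicative Chernoff lower tail for the visit count, and combine them by a union bound. As your argument implicitly shows, no independence between the two is needed. Your constants are cruder, but you need no closed form. Your bound is also automatically uniform over heterogeneous $p_{ij}$, with no $\eta$-dependent prefactor. The paper instead works with a single $p=\min_{i<j}p_{ij}$, which is valid because survival is monotone in $\eta$, but it leaves that step implicit.

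\textbf{One wording fix.} Your claim that two same-direction steps ``from any state in $\{-1,0,1\}$'' reach $\pm2$ is false from $\pm1$; for example, from $-1$ two up-steps land at $+1$. What you actually use, and state correctly in the next sentence, is that by parity every two-step block begins at $0$.
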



\begin{proof}
For any $i,j\in [n]$, consider the agents that were asked to compare $a_i, a_j$. The learning of the order of $a_i, a_j$ behaves like a biased random walk on $\{-2,-1,0,+1,+2\}$. 
Define  $p=\min_{i<j}p_{ij}$.
If $a_i \succ a_j$ in the ground-truth order, with probability $p$ the private signal agrees with ground-truth and the random walk moves to the right ($+1$), and with probability $1-p$ moves to the left ($-1$). Whenever the random hits $2$ or $-2$, the ordering is fixed at $a_i\succ a_j$ (agreeing with ground-truth) or $a_j\succ a_i$ (disagreeing with the ground-truth), respectively. 

This biased random walk starts with position $0$. In two steps it hits the boundary $\{2, -2\}$ with probability $\alpha=p^2+(1-p)^2$ (and becomes fixed thereafter) and returns to $0$ with probability $1-\alpha=2p(1-p)$. 
Let $S_y$ be the probability that the walk starting at $0$ survives $y$ steps, i.e., not hitting the boundary. It can be shown that 
\vspace{-0.1 cm}
$$S_y=\left (1+\frac{1}{\eta}\right )\frac{\eta^y}{2}+\left (1-\frac{1}{\eta}\right )\frac{(-\eta)^y}{2},$$
where $\eta=\sqrt{2p(1-p)}$.
This is because $S_y=2p(1-p)S_{y-2}=\eta^2S_{y-2}$, with $S_0=S_1=1$. Such a recurrence relation takes the form of $S_y=A \eta^y+B (-\eta)^y$, where $A$ and $B$ are constants that can be solved by plugging in $S_0=S_1=1$.

In a total of $T$ slots, a specific walk is chosen $Y$ times, where $Y$ follows a binomial distribution $B(T, 1/m)$ and this walk is chosen with equal probability among all $m={n\choose 2}$ ones. We now compute the expected value of $S_Y$ over all possible values of $Y$, $\mathbb{E}[S_{Y}]=\sum_{y} S_y \cdot \mathbb{P}(Y=y)$. Note that this is exactly the probability that this random walk survives after $T$ slots. By linearity of expectation, 
$$\mathbb{E}[S_{Y}]=\mathbb{E}[A\eta^Y+B(-\eta)^Y]=A\cdot\mathbb{E}[\eta^Y]+B\cdot\mathbb{E}[(-\eta)^Y].$$
Using the moment generating function of the binomial distribution\footnote{For a binomial distribution $X\sim B(n, p)$, the moment generating function is $\mathbb{E}[e^{tX}]=(1-p+pe^t)^n$.}, we can show that for any $x$ $$\mathbb{E}[x^{Y}]=\left ( \frac{x}{m}+1-\frac{1}{m}\right)^T.$$
Thus, 
$$\mathbb{E}[S_{Y}]=\frac{1}{2}\left (1+\frac{1}{\eta}\right )\left (1-\frac{1-\eta}{m}\right )^T+\frac{1}{2}\left (1-\frac{1}{\eta}\right )\left (1-\frac{1+\eta}{m}\right )^T\leq \frac{1}{2}\left (1+\frac{1}{\eta}\right )\left (1-\frac{1-\eta}{m}\right )^T.$$
The last inequality is because $\eta<1$ and the second term is negative.
Thus, the right-hand side is an upper bound of the probability that after $T$ slots, this walk is not fixed. We use union bound to show that we want $$m \cdot \frac{1}{2}\left (1+\frac{1}{\eta}\right )\left (1-\frac{1-\eta}{m}\right )^T\leq \delta,
$$
where $m={n\choose 2}$. This solves $T=O(n^2\ln (n/\delta))$ to be sufficient.
\end{proof}

We remark that after $T=\Theta(n^2\ln n)$, all pairs are fixed with high probability (taking $\delta=1/n$).  This bound cannot be reduced further, since $\Omega(n^2\ln n)$ agents are needed just to see all pairs $a_i, a_j$, by the coupon collector problem~\cite{Mitzenmacher2017}. 

The outcome of global herding is a complete directed random graph in which the orientation of the edge between each pair $(a_i, a_j)$ is independently sampled according to the probabilities $\{p^*_{ij}, 1 - p^*_{ij}\}$. Beyond this point, the graph becomes fixed and does not change as the process evolves further. 
The problem of recovering the underlying ranking from such a fixed random graph has been a central focus in the ``sorting without resampling'' literature.  Baverman and Mossel \cite{BM2008} were the first to study this problem. They seek to find an underlying true order,  given a complete random graph $\G$ with vertex set $[n]$, where for any $i\succ j$ the direction of each edge $(i, j)$ is sampled as $i \succ j$ with  probability at least $1/2 + \gamma$ and in the opposite direction with probability at most $1/2 - \gamma$ for constant $\gamma$. 

This model aligns with ours. 
In this case, the expected number of edges sampled in the incorrect direction constitutes a constant fraction of all edges. While this high error rate seemingly presents a  challenge for the reconstruction of the true order, Braverman and  Mossel \cite{BM2008} demonstrated the existence of a permutation $\sigma^*$, computable in $O(n^{2\gamma^{-4}})$ time and sampling  $O_\gamma(n\log n)$ edges from $\G$, such that the \emph{maximum dislocation} in $\sigma^*$ is bounded by $\Theta(\log n)$, and 
the \emph{total dislocation} of $\sigma^*$ is bounded by $\Theta(n)$, i.e, with  probability at least $1-o(1/n)$:
\vspace{-0.1 cm}
\begin{equation}\label{eq:max:dislocation}
\forall a \in \N, |\text{rank}(a) - \sigma^*(a)| \leq \Theta(\log n), ~ \text{and}\quad  \sum_{a\in \N}|\text{rank}(a) - \sigma^*(a)| \leq \Theta(n).
\end{equation}

Geissmann et al. \cite{geissmann2017sorting} considered the same problem and error model, improving the runtime for finding $\sigma^*$ to $O(n^2)$. 
Later on, they proposed \textsc{RiffleSort}, an adaptive algorithm with runtime and sample complexity  
$O(n\log n)$, but requiring that  the error probabilities are always below 1/16 \cite{ geissmann2019optimal}.

\vspace{-0.2 cm}
\subsection{Sublinear Query for Top-$k$  Discovery}\label{sec:Alg:sublinear}

It is worth noting that, once global herding occurs, existing algorithms from the sorting-without-resampling literature, such as \cite{geissmann2017sorting}, can efficiently recover the top-$k$ elements for $k\in\Theta(\log n)$ in polynomial time. The challenge is that, without intervening in the social learning process, this process  requires querying $O_\gamma(n^2\log n)$ agents, as shown in \Cref{thm:globalherding}. \textsc{RiffleSort} \cite{geissmann2019optimal} can be interleaved in the social learning process to achieve query and runtime complexity $O_\gamma(n\log n)$. However, it requires the error probabilities for all edges to remain below $1/16$.
The objective of this section is to improve upon these guarantees by interleaving the execution of our algorithms with the sequential learning process itself.
Our goal is to design adaptive polynomial time algorithms for top-$k$ retrieval with the number of learning agents being \emph{sublinear} in the total number of edges in $\G$, i.e., $o(n^2)$ by \emph{only} assuming the strong stochastic transitivity (Assumption \ref{assumption}).



To this end, we develop two algorithms with sublinear query complexity for $\top(k)$ retrieval using sequential learning agents, in the regimes $k\in\Theta(\log n)$ and $k\in\Theta(\log\log n)$.  The algorithms adaptively select indices $i,j\in[n]$ and present the pair $(a_i,a_j)$ to the next learning agent, whose action is to compare $a_i$ and $a_j$. Our analysis relies solely on Assumption~\ref{assumption}, placing no further restrictions on the error probability. Unlike \textsc{RiffleSort}~\cite{geissmann2019optimal}, which relies on complex search-tree constructions and merging procedures, our algorithms are conceptually simple, driven entirely by repeating a  sampling and elimination procedure.
We  show that the two algorithms exhibit different trade-offs between the success probability and the achievable size of $k$. \Cref{thm:fistalg,thm:secondalg} formalize these trade-offs. Furthermore, \Cref{lem:lowerbound1} provides a lower bound analysis, demonstrating that this trade-off is inherent to the problem.

\medskip 

We now propose \algone, our adaptive algorithm for finding the top-$k$ elements when $k\in \Theta(\log n)$.
The pseudocode of  \algone\ is presented in 
\cref{alg:first}. It starts with the initial set $\Gamma_0=\N$ and iteratively calls  \proc\ (\Cref{alg:dividproc}) to obtain $\Gamma_{r+1}=\proc(\Gamma_r)$ for $r=0,1,2\dots, \mathbf{r}-1$ until reaching $\Gamma_\mathbf{r}$ with cardinality less than $L\in \Theta_\gamma\left( \log n\right)$.  The algorithm will terminate after  outputting $\Gamma_\mathbf{r}$. The pseudocode  of \proc\ is presented in \Cref{alg:dividproc}. At the $r$th run, it receives as input $\Gamma_{r-1}$ and estimates the rank of each element  $a_i\in \Gamma_{r-1}$ by collecting  a set $N_i$ of $D_{r-1}$ other elements sampled independently, uniformly at random and with replacement from $\Gamma_{r-1}$ (using \Cref{alg:compare1}). Formally, for any set $S\subseteq \N$ 
let $\rank(a_i,S)$ be defined as:
\[
\rank(a_i,S)=\left|\{a'\in S\mid a'\succ a_i\}\right|~.\footnote{Throughout this proof, item ranks are $0$-indexed ($\{0, 1, 2, \dots\}$) for mathematical convenience, contrasting with the $1$-based convention used in the rest of the paper.}
\]

\begin{minipage}{0.44\textwidth}

\begin{algorithm}[H]
\caption{\compare}
\label{alg:compare1}
{\small
\begin{algorithmic}[1]
\STATE \textbf{Input:} $i,j\in \N$
\STATE $s=H_{a_i\prec a_j}(<t)-H_{a_i\succ a_j}(<t)$
\REPEAT
    \STATE query the next learning agent, $v_t$, to compare $a_i,a_j$
    \IF{$\sig(v_t)=a_i\prec a_j$}
        \STATE $s++$
    \ELSE
        \STATE $s--$
    \ENDIF

\UNTIL{$s=+2$ or $s=-2$}
\IF{$s>0$}
    \RETURN $a_i\expprec a_j$
\ELSE
    \RETURN $a_i\expsucc a_j$
\ENDIF
\end{algorithmic}
}
\end{algorithm}

\end{minipage}
\hfill
\begin{minipage}{0.53\textwidth}

\begin{algorithm}[H]
\caption{\compareii}
\label{alg:compare2}
{\small
\begin{algorithmic}[1]
\STATE \textbf{Input:} $i,j\in \N$, number of repetitions  $\theta\in \mathbb{N}$
\STATE $s=0$
\FOR{$\kappa=1:\theta$}
    \STATE query the next \textbf{stubborn} learning agent, $v_t$, to compare $a_i,a_j$
    \IF{$\sig(v_t)=a_i\prec a_j$}
        \STATE $s++$
    \ELSE
        \STATE $s--$
    \ENDIF
\ENDFOR
\IF{$s>0$}
    \RETURN $a_i\prec a_j$
\ELSE
    \RETURN $a_i\succ a_j$
\ENDIF
\end{algorithmic}
}
\end{algorithm}
\end{minipage}

\bigskip
We define two random variable for estimating the rank of $a_i$ in $S$,  $\hat{R}_i(S)$ and its normalized  rank $\hat{\mu}_i$   as:
\[
\hat{R}_i(S)=\left|\{a'\in N_i\mid  a_i\succ^* a'\}\right|, \text{ and } ~\hat{\mu}_i(S)=\hat{R}_i(S)/|N_i|~.
\]

Let $|\Gamma_{r-1}|=m_{r-1}$, at the $r$th run, \proc\ calculates the estimated ranks $\hat{R}_i(\Gamma_{r-1})$ for any $a_i\in \Gamma_{r-1}$ and discards $\lfloor m_{r-1}/3\rfloor$ elements with lowest estimated ranks (lowest values of $\hat{R}_i(\Gamma_{r-1})$) by breaking the ties arbitrarily. 

\paragraph{\textbf{A road-map to the proof of correctness of \algone.}}
Given a set $S\subseteq \N$ with cardinality $|S|=m$, let's define for any 
$a_i\in S$, 
\[\mu_i(S)=\frac{\sum_{j\in S\setminus \{i\}}p^*_{ij}}{m-1}~.\]
The following lemma is proven in the full paper:

\begin{restatable}{lemma}{mugap}\label{lem:mugap}
Let $S$ be an arbitrary subset of $\N$ with $|S|=m$. Under Strong Stochastic Transitivity assumption, for any $a_i,a_j$ such that $\rank(a_i,S)<\rank(a_j,S)$, we have:
\[
\mu_i(S)-\mu_j(S)\geq \frac{2\gamma}{m-1}\left(
\rank(a_j,S)- \rank(a_i,S)
\right)~.
\]
\end{restatable}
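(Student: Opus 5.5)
The plan is to expand $(m-1)\bigl(\mu_i(S)-\mu_j(S)\bigr)$ term by term and bound each term from below. First I will transfer Strong Stochastic Transitivity and the $\gamma$-gap from the signal probabilities $p$ to the fixing probabilities $p^*$. Throughout I assume $\rank(a_i,S)<\rank(a_j,S)$, i.e.\ $i<j$. By definition,
\[
(m-1)\bigl(\mu_i(S)-\mu_j(S)\bigr)=\bigl(p^*_{ij}-p^*_{ji}\bigr)+\sum_{\kappa\in S\setminus\{i,j\}}\bigl(p^*_{i\kappa}-p^*_{j\kappa}\bigr).
\]
Here the first bracket collects the $\kappa=j$ term of $\mu_i$ and the $\kappa=i$ term of $\mu_j$.

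\emph{Step 1: properties of $p\mapsto p^*$.} By \Cref{lem:herding:regular}, $p^*_{ab}=f(p_{ab})$ with
\[
f(p)=\frac{1}{1+\bigl((1-p)/p\bigr)^2}=\frac{p^2}{p^2+(1-p)^2}.
\]
The function $f$ is increasing on $(0,1)$, because $(1-p)/p$ is decreasing, and it satisfies $f(1-p)=1-f(p)$. I also claim $f(p)\ge p$ for $p\ge 1/2$. This inequality is equivalent to $p\ge p^2+(1-p)^2$, i.e.\ to $(2p-1)(1-p)\ge 0$, which holds. Two consequences follow.
\begin{itemize}
\item Assumption~\ref{assumption} transfers to $p^*$: $p^*_{i\kappa}\ge p^*_{j\kappa}$ for every $\kappa\neq i,j$.
\item For every $a<b$ we have $p^*_{ab}\ge p_{ab}>1/2+\gamma$, and hence $p^*_{ba}<1/2-\gamma$.
\end{itemize}

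\emph{Step 2: split the sum by the position of $\kappa$.} Every term with $\kappa\in S\setminus\{i,j\}$ is nonnegative by the transferred SST. For the $\rank(a_j,S)-\rank(a_i,S)-1$ elements $\kappa\in S$ lying strictly between $a_i$ and $a_j$ (i.e.\ $i<\kappa<j$), Step 1 gives $p^*_{i\kappa}>1/2+\gamma$ and $p^*_{j\kappa}<1/2-\gamma$. Each such term is therefore at least $2\gamma$. The pair term satisfies $p^*_{ij}-p^*_{ji}=2p^*_{ij}-1>2\gamma$. Summing, the right-hand side is at least
\[
2\gamma\bigl(\rank(a_j,S)-\rank(a_i,S)\bigr),
\]
and dividing by $m-1$ gives the claim.

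The main obstacle is Step 1. The hypotheses (SST and the $\gamma$-gap) are stated for the signal matrix $\Ps$, whereas $\mu$ is defined through the fixed-edge probabilities $p^*$. The key is to use the explicit formula from \Cref{lem:herding:regular}: monotonicity of $f$ carries SST over, and the inequality $f(p)\ge p$ for $p\ge1/2$ preserves the gap $\gamma$, since herding only amplifies the bias of each pair. After that, the counting in Step 2 is routine; the only care needed is to count the elements strictly between $a_i$ and $a_j$ together with the pair $(i,j)$ itself.
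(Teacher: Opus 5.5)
Your proof is correct and follows the same route as the paper's. Like the paper, you split $S$ by position relative to $a_i$ and $a_j$, use SST to make the terms outside the interval nonnegative, and collect at least $2\gamma$ from the pair term and from each element strictly between them. Your Step 1 adds something useful: it shows from the formula in \Cref{lem:herding:regular} that $p\mapsto p^*$ is monotone with $p^*_{ab}\ge p_{ab}$, which justifies carrying SST and the $\gamma$-gap over from $\Ps$ to the fixing probabilities, a step the paper's proof takes without comment.
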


Given the above lemma, if $S$ is a \emph{fixed set}, one can use the Hoeffding bound and use  $\mathbb{E}[\hat{\mu}_i(S)]=\mu_i(S)$ to show a gap between the estimated ranks of high and low ranked elements in $S$ and  that \proc\ does not remove any of the elements in $\top(m_{r-1}/3)$. The technical challenge is that, the set $\Gamma_{r-1}$ is selected \emph{adaptively}, and some of the edges in it have already been sampled in previous runs of \proc\ and have known orientations. This creates a deviation of $\mathbb{E}[\hat{\mu}_i(S)]$ from $\mu_i(S)$. Formally, for any $a_i\in S$, we define
\[
E_{ij}=\mathbf{1}(a_i\succ^* a_j)-p^*_{ij}~, \text{ and}\quad  E_i(S)= \frac{\sum_{j\in S\setminus\{i\} }E_{ij}}{m-1}
~.\]
Therefore,  we can write: 
\begin{equation}\label{eq:correctionterm}
  \mathbb{E}\left[\hat{\mu}_i(S)\right]= \mu_i(S) +  \underbrace{E_i(S)}_{\text{correction term}}~.  
\end{equation}
Note that \cref{eq:correctionterm} is valid for any adaptive set $S$, conditioning on the history of the sampled graph.

\begin{minipage}{0.48\textwidth}
\begin{algorithm}[H]
\caption{\algone$(\N, \gamma)$}\label{alg:first}
{\small
\begin{algorithmic}[1]
\STATE \textbf{Input:} set $\N$, bias $\gamma$
\STATE $L=(8*320^3/\gamma^5) \log n$
\STATE $\Gamma=\N$
\REPEAT
\STATE $\Gamma=\proc(\Gamma,1/n, \gamma)$
\UNTIL{$|\Gamma|\leq L$}
\RETURN $\Gamma$
\end{algorithmic}
}
\end{algorithm}
\end{minipage}
\hfill
\begin{minipage}{0.48\textwidth}
\begin{algorithm}[H]
\caption{\algtwo$(\N, \gamma)$}\label{alg:second}
{\small
\begin{algorithmic}[1]
\STATE \textbf{Input:} set $\N$, bias $\gamma$
\STATE $L=
\frac{8*320^3}{\gamma^5}
\log(\log n)+ \frac{8*320^3}{\gamma^5}\log(\frac{8*320^3}{\gamma^5})
$
\STATE $\Gamma=\algone(\N)$
\REPEAT
\STATE $\Gamma=\proc(\Gamma,1/\log(n),\gamma)$
\UNTIL{$|\Gamma|\leq L$}
\RETURN $\Gamma$
\end{algorithmic}
}
\end{algorithm}
\end{minipage}

\begin{algorithm}
{\small
\caption{\proc$(S, \delta, \gamma)$}\label{alg:dividproc}
\begin{algorithmic}[1]
\STATE \textbf{Input:} set $S$, success probability $\delta$, and bias $\gamma$.
\STATE $m=|S|$
\STATE $C_\gamma=4*320^3/\gamma^5$
\STATE $D=C_\gamma\log (m\cdot \delta^{-1})$
\FOR{$i\in S$}
\STATE $\hat{R}_i=0$
\STATE $N_i\leftarrow$ $D$ uniform random samples from $S\setminus\{i\}$
\FOR{$a_j\in N_i$}
\STATE \textbf{if} $ \compare(i,j)=a_j\expprec a_i$ \textbf{then}, $\hat{R}_i++$
\ENDFOR 
\ENDFOR
\STATE Sort $S$ w.r.t $\hat{R}_i$ and break ties arbitrarily 
\STATE discard $\lfloor m/3\rfloor$ elements of $S$ with lowest ranks. Let the remaining elements be $\Gamma$
\RETURN $\Gamma$
\end{algorithmic}
}
\end{algorithm}

Consider the $r$th run of \proc.   Note that the correction terms appear as    $E_i(\Gamma_r)$ in our analysis and $\Gamma_r$ is obtained in an adaptive manner, thus the expectation of $E_i(\Gamma_r)$ is not a fixed value. Therefore,   concentration bounds cannot be trivially be applied to bound the correction term. To address this challenge we show (using induction) that for any $r\in\{1,2,\dots ,\mathbf{r}\}$, the set $\Gamma_r$ is not too different from the deterministic set $\top(m_r)$. Using this observation, we can employ Hoeffding bound to $E_i(\top(m_r))$, with expectation zero, which constitutes a key step in the proof. The following theorem states this formally, and the proof is presented in the full paper.

\begin{restatable}{theorem}{inductive}\label{inductiveproof}
Let $\Gamma_0=\N$ and for $r\in \{0,1,2,\dots, \mathbf{r}-1\}$ we have: $\Gamma_{r+1}=\proc(\Gamma_r)$. For $r\in \{0,1,\dots, \mathbf{r}\}$ we denote $|\Gamma_r|=m_r$. Given $\delta\geq n^{-2}$, and $m_{r}\geq 4*320^3/\gamma^5\log n$, at round $r$ it holds that:
\begin{align*}
 1.\quad & \top(\lfloor\frac{11}{16}m_r\rfloor)\subseteq \Gamma_r\subseteq \top( 2 m_r)~.\\
 2.\quad & \big|  \Gamma_r\setdiff\top(m_r) \big|\leq \frac{\gamma}{80}(m_r-1)~.
\end{align*}
with a failure probability at most  
\[
\frac{9\delta}{n}\sum_{i=0}^{r} (\frac{3}{2})^i~.
\]
\end{restatable}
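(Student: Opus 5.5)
We argue by induction on $r$, proving that both properties hold at round $r$ outside a failure event whose probability is at most $\frac{9\delta}{n}(3/2)^r$; a union bound over rounds then gives the stated sum. The base case $r=0$ is trivial, since $\Gamma_0=\N=\top(m_0)$. For the inductive step, assume properties 1 and 2 hold for $\Gamma_r$ and analyze one call of \proc\ on $S=\Gamma_r$ with $m=m_r$. The goal is to show that every element of $\top(\lfloor\frac{11}{16}m_{r+1}\rfloor)$ receives a larger $\hat R$ than at least $\lfloor m/3\rfloor$ elements of $S$, so it survives. We must also show that every element outside $\top(2m_{r+1})$ is beaten by at least $m_{r+1}=m-\lfloor m/3\rfloor$ elements, so it is discarded. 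Finally, we must show the symmetric difference $\Gamma_{r+1}\setdiff\top(m_{r+1})$ is small.

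The key decomposition is \eqref{eq:correctionterm}: conditional on the history, $\hat\mu_i(S)$ is an average of $D$ i.i.d.\ indicators with mean $\mu_i(S)+E_i(S)$. This holds because \compare\ either returns an already fixed edge or runs a fresh walk. Hence Hoeffding gives $|\hat\mu_i-\mu_i(S)-E_i(S)|\le \gamma/320$ (say) with failure $\exp(-2D\gamma^2/320^2)$. With $D=C_\gamma\log(m/\delta)$ this is $\le \delta/m^2$, so a union bound over the $m$ elements costs at most $\delta/n$-scale probability at this step. The constant $C_\gamma$ is large enough to absorb the repeated-sample issue: samples drawn with replacement may hit the same $j$ twice, but each is still a draw of a fixed-or-fresh edge with the right conditional mean.

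The main obstacle is the correction term $E_i(S)$, because $S=\Gamma_r$ is adaptive. We split it as
\[
(m-1)E_i(\Gamma_r)=\sum_{j\in\top(m)\setminus\{i\}}E_{ij}+\sum_{j\in\Gamma_r\setminus\top(m)}E_{ij}-\sum_{j\in\top(m)\setminus\Gamma_r}E_{ij}.
\]
The first sum ranges over a deterministic set, and its terms are independent and mean zero: the orientations of the fixed edges are independent with probabilities $p^*_{ij}$, regardless of when they were revealed. So Hoeffding bounds it by $\gamma(m-1)/160$ with failure $\le\delta/n^{3}$ per $i$. We take a union bound over $i\in\top(2m)$, which is enough by property 1, or over all of $\N$ for safety. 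The other two sums have at most $|\Gamma_r\setdiff\top(m)|\le\frac{\gamma}{80}(m-1)$ terms, each bounded by $1$ in absolute value, by property 2. Hence $|E_i(\Gamma_r)|\le \gamma/160+\gamma/80$, uniformly in $i$.

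Next, combine these estimates with Lemma~\ref{lem:mugap}. Write $\hat\mu_i=\mu_i(S)\pm c\gamma$ with $c\approx 1/40$. For two elements whose ranks in $S$ differ by more than $c(m-1)$, the gap $\mu_i-\mu_j\ge 2\gamma(\rank(a_j,S)-\rank(a_i,S))/(m-1)$ beats the total error $2c\gamma$, so their estimated order is correct. Consequently, an element of $S$ can be misplaced relative to its true $S$-rank only across a window of width about $c m$. This gives three consequences. First, elements with $S$-rank below roughly $\frac23 m-cm$ are never among the discarded bottom $\lfloor m/3\rfloor$. Combined with $\top(\lfloor\frac{11}{16}m_r\rfloor)\subseteq\Gamma_r$ and $m_{r+1}\approx\frac23 m_r$, this yields $\top(\lfloor\frac{11}{16}m_{r+1}\rfloor)\subseteq\Gamma_{r+1}$. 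Second, $\Gamma_r\subseteq\top(2m_r)$, so the top $\frac23 m_r$ of $\Gamma_r$ lie in $\top(2m_{r+1})$, giving the upper containment. Third, the set $\Gamma_{r+1}\setdiff\top(m_{r+1})$ consists only of elements in the ambiguity window, together with the $\le\frac{\gamma}{80}m_r$ elements inherited from $\Gamma_r\setdiff\top(m_r)$, which shift $S$-ranks relative to global ranks.

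The delicate point is to show that this symmetric difference stays $\le\frac{\gamma}{80}(m_{r+1}-1)$ rather than growing. I would try to get this by making the Hoeffding precision much finer than the window needed. The window that actually matters is the one near the cut at $S$-rank $m_{r+1}$, and the confusion there has width proportional to the estimation error divided by $\gamma$. This is where the $320^3/\gamma^5$ constant enters: it drives that width below $\frac{\gamma}{160}m$. The inherited discrepancy is then re-absorbed, because elements of $\Gamma_r\setminus\top(m_r)$ rank at the bottom of $S$ and are discarded. The hypothesis $m_r\ge 4\cdot320^3\gamma^{-5}\log n$ ensures the Hoeffding bounds over sets of size $m$ are meaningful. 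Summing failures of at most $9\delta/n$ per round, with the $(3/2)^r$ factor coming from the $\log(m/\delta)$ scaling as $m$ shrinks geometrically, closes the induction.
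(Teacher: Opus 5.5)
There is a genuine gap, and it sits exactly where you say the argument is delicate: preserving statement 2.

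Your treatment of the sampling noise, and of $E_i(\Gamma_r)$ via the deterministic set $\top(m_r)$, matches the paper's Step 1. It suffices for the lower containment in statement 1. It cannot give statement 2. Your per-element error for $\hat\mu_i(\Gamma_r)$ contains the inherited term $|\Gamma_r\setdiff\top(m_r)|/(m_r-1)$, which the hypothesis bounds only by $\gamma/80$. Since $\Gamma_r$ was selected using these very edges, the trivial bound is the only one available. Increasing $D$ does not shrink it, because it is a bias of the adaptively chosen set, not sampling noise.

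Feed an error $\epsilon\ge\gamma/80$ into Lemma~\ref{lem:mugap}. Two elements are then guaranteed to be ordered correctly only if their $S$-ranks differ by more than $\epsilon(m_r-1)/\gamma\ge(m_r-1)/80$. So the confusion window at the cut has width $\Theta(m_r)$, independent of $\gamma$. The best you can conclude is $|\Gamma_{r+1}\setdiff\top(m_{r+1})|$ at most about $m_r/40$, well above the target $\frac{\gamma}{80}(m_{r+1}-1)\approx\frac{\gamma}{120}m_r$.

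Strengthening the hypothesis does not help. Each round turns a discrepancy $\epsilon m$ into a window of about $(\epsilon/\gamma)m$, so a uniform per-element argument loses a factor of order $1/\gamma$ per round and never contracts. Your claim that the $320^3/\gamma^5$ constant pushes the window below $\frac{\gamma}{160}m$ is therefore not supported by your own bounds. A smaller point: for the upper containment you need statement 2 at round $r$ to see that kept elements inside the window have global rank at most $m_r$. The inclusion $\Gamma_r\subseteq\top(2m_r)$ alone does not place them in $\top(2m_{r+1})$.

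The missing idea is the aggregate argument of Lemma~\ref{lem:setdif}.
\begin{itemize}
\item Let $D=\top(m_{r+1})\setminus\Gamma_{r+1}$ and $K=\Gamma_{r+1}\setminus\top(m_{r+1})$, both of size $h$. Pair the $i$-th elements of $D$ and $K$ by $S$-rank and sum Lemma~\ref{lem:mugap}. This gives $\sum_{D}\mu_j(S)-\sum_{K}\mu_i(S)\ge\frac{2\gamma}{m_r-1}h^2$, a gap quadratic in $h$. The discard rule forces $\sum_K\hat\mu_i\ge\sum_D\hat\mu_j$.
\item Treat $S$, $D$, $K$ as fixed sets and union-bound over all $2^{O(m_r)}$ triples allowed by statement 1. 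For a fixed $S$ the correction terms are genuinely zero-mean, so the inherited discrepancy never appears as a bias. It enters only through the range of candidate sets.
\item Antisymmetry $E_{ii'}=-E_{i'i}$ cancels the pairs inside $K$ (and inside $D$), leaving $h(m_r-h)$ independent terms.
\item With $h=\gamma m_r/320$, the two Hoeffding exponents are $\gamma^2h^3D_r/(2m_r^2)=2m_r\log(m_r/\delta)$ and about $\gamma^5m_r^2/(2\cdot320^3)$. Both beat $2^{6m_r}$, and this is where $320^3/\gamma^5$ actually enters.
\end{itemize}
Statement 2 then follows from $2h\le\frac{\gamma}{160}m_r\le\frac{\gamma}{80}(m_{r+1}-1)$.
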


\begin{proof}
We prove this theorem using induction on $r$. For $r=0$,  both statements hold trivially with probability one since $\Gamma_0=\N=\top(n)$. 
For the inductive step assume $(1)$ and $(2)$ to be true for $r$ and we show them for $r+1$ through the following  steps:

\smallskip 

\noindent 
\emph{Step 1.} If  (1) and (2) hold for $S=\Gamma_r$ we have:
\begin{equation}\label{concentratedmean}
 \mathbb{P}\left( \forall i \in S,~ \big| \hat{\mu}_i(S)-\mu_i(S)\big|  \leq \gamma/5\right)\geq 1- 5\delta/m_r~. 
\end{equation}
Note that the above bound is shown for adaptively selected $S$ as long as  (2) holds. 

To show \cref{concentratedmean}, we consider an arbitrary $a_i\in S$ and write:
\begin{align}\label{eq:decompose}
\big|
\hat{\mu}_i(S)-\mu_i(S)
\big|&\leq 
\big|
\hat{\mu}_i(S)-\mathbb{E}[\hat{\mu_i}(S)]
\big|+ \big|
\mathbb{E}[\hat{\mu_i}(S)]-\mu_i(S)
\big|\\
&=\big|
\hat{\mu}_i(S)-\mathbb{E}[\hat{\mu_i}(S)]
\big|+\big|E_i(S)\big|~
\end{align}
Note that the first term can be bounded through a straightforward application of the Hoeffding bound as follows: 
\begin{align}\label{firstterm}
&\mathbb{P}\left(\big|
\hat{\mu}_i(S)-\mathbb{E}[\hat{\mu}_i(S)]\big|
\geq \frac{7}{40}\cdot \gamma
\right)\\
\leq & 2\exp\left(
-2 D_r(\frac{7\cdot \gamma}{40})^2
\right)\leq 2 \exp\left(-\frac{2*49}{1600\cdot \gamma^2}C_\gamma\log \frac{m_r}{\delta}\right)\\
\leq & 2\exp\left(-2\log \frac{m_r}{\delta}\right)\leq 2\delta/m^2_r~.
\end{align}

In order to bound $E_i(S)$, we first bound $E_i\left(\top(m_r)\right)$. Since $\top(m_r)$ is a fixed set we may employ  the Hoeffding bound on the average of $m_r$ independent, zero mean random variables. 

\[
\mathbb{P}\left(
\big|E_{i}\left(\top(m_r)\right)\big|\geq 
\gamma/80
\right)\leq 2\exp\left(-
(\gamma^2/6400)\cdot  m_r
\right)~.
\]
Note that 
\[
E_i(S)=\frac{\sum_{a_j\in S\setminus\{a_i\}}E_{ij}}{m_r-1}\leq  \frac{\sum_{a_j\in \top(m_r)}E_{ij}}{m_r-1}+ \frac{ S\setdiff \top(m_r) }{m_r-1}
\]
Using the above and (2) for all $a_i\in S$ we write:  
\begin{equation}\label{secondterm}
    \big| E_i(S)\big|\leq \big| E_i\left(\top(m_r)\right)\big|+ \big| S\setdiff \top(m_r)\big|/(m_r-1)\leq  \frac{\gamma}{80}+ \frac{\gamma}{80}= \gamma/40~.
\end{equation}

Plugging in \cref{firstterm} and \cref{secondterm} in \cref{eq:decompose} we obtain:  
\[
\big|
\hat{\mu}_i(S)-\mu_i(S)
\big|\leq \frac{7}{40}\gamma+ \frac{1}{40}\gamma= \gamma/5~.\] 

Therefore, if (2) holds for $S=\Gamma_r$  for $a_i\in S$ and a large enough $m_r$ we have: 
\[
\mathbb{P}\left(\big|
\hat{\mu}_i(S)-\mu_i(S)
\big|\geq \gamma/5\right)\leq  2\exp \left(-\frac{\gamma^2}{6400}\cdot m_r\right)+2\delta/m^2_r~.
\]
Taking a union bound over all $a_i\in \top(2m_r)$ and using $S\subseteq \top(2m_r)$ (statement 1 of induction hypothesis), we obtain:
\[
\mathbb{P}\left(
\exists a_i\in \top(2m_r), 
\big|
\hat{\mu}_i(S)-\mu_i(S)
\big|\geq \gamma/5\right)\leq  2m_r\cdot \left[2\exp \left(-\frac{\gamma^2}{6400}\cdot m_r\right)\right]+4\delta/m_r~.
\]

Using $n\geq m_r\geq \frac{8*320^3}{\gamma^5}\log n$ and $\delta\geq 1/n^2$:
\begin{align}\label{firstbound}
   4m_r\exp\left(-\frac{\gamma^2}{6400}\cdot m_r\right)&\leq 4m_r\exp\left(-\frac{\gamma^2}{6400}\cdot \frac{8*320^3}{\gamma^5}\log n\right) \nonumber\\
   &
   \leq 4m_r\exp\left(-\frac{8*512}{\gamma^3}\log n\right) \nonumber
   \\
   &\leq 4m_r\exp\left(-512\log n\right) \leq \frac{4m_r}{n^{512}}\leq \delta/m_r~.
\end{align}
Putting together, we have:
\[
\mathbb{P}\left(\exists a_i\in S,~\big|
\hat{\mu}_i(S)-\mu_i(S)
\big|\geq \gamma/5\right)\leq 5\delta/m_r~.
\]

\smallskip 

\noindent
\emph{Step 2.} Assume (1) for round $r$.  We have:
\begin{equation}\label{eq:step2}
\mathbb{P}\left(\big|
\top(m_{r+1})\setminus \Gamma_{r+1}\big|\leq \frac{\gamma}{320}\cdot m_r
\right)\geq 1- 4\delta/m_r
\end{equation}

We show Step 2 in \Cref{lem:setdif}.
Note that from \Cref{eq:step2}, the statement 2 of the inductive step can be proved directly as follows:
\begin{align*}
&\big|\Gamma_{r+1}\setdiff \top(m_{r+1})\big|= \big| \top(m_{r+1})\setminus \Gamma_{r+1}\big|+\big| \Gamma_{r+1}\setminus \top(m_{r+1})\big|=2 \big| \top(m_{r+1})\setminus \Gamma_{r+1}\big|\\
\leq & \frac{\gamma}{160}\cdot m_r\leq \frac{\gamma}{160}\cdot \frac{3}{2}\cdot m_{r+1}\leq 
\frac{\gamma(m_{r+1}-1)}{80}~.
\end{align*}
The first equation is because $\big| \Gamma_{r+1}\big|=m_{r+1}$.

We now finish proving the inductive step by showing statement 1 holds for $r+1$. 
Knowing $m_{r+1}=\lceil 2m_r /3\rceil $, we show the following stronger result: 
\[
\top(\lfloor \frac{11}{16} m_{r+1}\rfloor)\subseteq \Gamma_{r+1}\quad \& \quad \Gamma_{r+1}\subseteq \top\left( m_{r} \right)~.
\]
 First note that for large enough $m_{r+1}$
 \[\top(\lfloor \frac{11}{16} m_{r+1}\rfloor )\subseteq \top(m_{r+1})\subseteq \top(\lceil \frac{11}{16}m_r\rceil)\subseteq \Gamma_r~. \]

Consider  $a_i\in \top(\lfloor \frac{11}{16} m_{r+1}\rfloor)$, which implies  $a_i\in \Gamma_{r}$. We show that  \proc\ will not remove $a_i$ from $\Gamma_r$.
For  any element $a_j$ in $\Gamma_r$ but  not in $\top(m_{r+1})$, since $\top(m_{r+1})\subseteq \Gamma_r$ we have that $\rank(a_j,\Gamma_r)> m_{r+1}$. Furthermore, for $a_i\in \top(\lfloor \frac{11}{16} m_{r+1}\rfloor)$ we have   $\rank(a_i,\Gamma_r)\leq  \frac{11}{16} m_{r+1} $.  Therefore,  for any $a_i\in \top(\lceil \frac{11}{16}m_{r+1}\rceil ), a_j\in \Gamma_r\setminus \top(m_{r+1})$ we have:
\[
\rank(a_j,\Gamma_r)-\rank(a_i,\Gamma_r)\geq \frac{5}{16}m_{r+1}\geq \frac{5}{24}m_r~\underset{\Cref{lem:mugap}}
 {\implies}\quad 
  \mu_i(\Gamma_r)-\mu_j(\Gamma_r)\geq \frac{5}{12}\gamma~.
\]

Using \cref{concentratedmean} we conclude that with probability at least $1-2\delta/m_r$ we have:
\[
a_i\in \top(\lfloor \frac{11}{16} m_{r+1}\rfloor), a_j\in \Gamma_r\setminus \top(m_{r+1})\implies \hat{\mu}_i-\hat{\mu}_j\geq 5\gamma/12 - 2\gamma/5>0~.
\]

Note that \proc\ preserves $m_{r+1}$ elements of $\Gamma_{r}$. We have that $a_i$ beats any item outside $\top(m_{r+1})$, and there are only $m_{r+1}-1$ elements other than $a_i$ inside $\top(m_{r+1})$. Therefore, with probability at least $1-2 \delta/m_r$ \proc\ will not remove $a_i$ from $\Gamma_r$.

\medskip 

We now show that any element outside $\top(m_r)$ will be removed with high probability by \proc. 
Note that any $a_j\in \Gamma_r\setminus \top(m_r)$ satisfies $\rank(a_j,\Gamma_r)> m_r-\vert \Gamma_r \setminus\top(m_r) \vert $.
Using statement 2 we have that 
\[
\vert \Gamma_r \setminus\top(m_r) \vert  \leq \vert \Gamma_r \setdiff\top(m_r)\vert /2 \leq  (\gamma/160)\cdot (m_r-1).
\]

Therefore, $\rank(a_j,\Gamma_r)\geq  m_r(1-\gamma/160) $. Consider any $a_i\in \top(m_{r+1})$. Since $m_{r+1}\leq \lceil 2/3 m_r\rceil $, we have that $\rank(a_i,\Gamma_r)\leq 2m_r/3$. Thus,
\[
\rank(a_j,\Gamma_r)-\rank(a_i,\Gamma_r)\geq (1/3-\gamma/160)m_r~\underset{\substack{\Cref{lem:mugap}\\
 \gamma/160\leq 1/320
 }}
 {\implies}\quad 
  \mu_i(\Gamma_r)-\mu_j(\Gamma_r)\geq \frac{2}{5}\gamma~.
\]

Using \cref{concentratedmean} we conclude that with probability at least $1-2\delta/m_r$ the estimated rank of any $a_i\in \top(m_{r+1})$ will be larger than the estimated rank of any  $a_j\in \Gamma_{r}\setminus\top(m_{r})$. Note that \proc\ preserves only $m_{r+1}$ elements and $\vert \top(m_{r+1})\vert=m_{r+1}$. Therefore, $a_j$ will be deleted.

By induction hypothesis, the probability that (1) or (2) do not hold is bounded by 
\[
\frac{9\delta}{n}\sum_{i=0}^{r-1}(3/2)^i.
\]

From \cref{concentratedmean} and \cref{eq:step2} the probability of failure at this round is bounded by $3\delta/m_r+4\delta/m_r$.
Since $m_r\leq \lceil(2/3)^rn\rceil $, putting all  together we get that the probability of failure by the end of this round is bounded by 
\[
\frac{9\delta}{n}\sum_{i=0}^{r-1}(3/2)^i+ 
9\delta/m_r\leq \frac{9\delta}{n}\sum_{i=0}^{r}(3/2)^i~.
\]

\end{proof}
We now proceed to the proof of Step 2. Here, we encounter the same technical obstacle: establishing concentration for an adaptively determined set whose expectation is unknown. We resolve this by analyzing a deterministic candidate set and taking a union bound over all realizable instances. Crucially, statement (1) of the inductive step bounds the range of input sets to \proc{}, allowing us to control the number of candidate sets. Specifically, we show the following lemma with proof deferred to the appendix:

\begin{restatable}{lemma}{setdiflemma}
\label{lem:setdif}
For any $S\subseteq \N$, let  $S'=\proc(S)$. Consider all $S$ with $|S|= m_r$ satisfying $\top(\lfloor \frac{11}{16}m_r\rfloor)\subseteq S\subseteq \top(2m_r)$. Let $\delta\geq n^{-2}$, and $m_{r}\geq 4*320^3/\gamma^5\log n$,  we have that:
\begin{equation}
\mathbb{P}\left(\forall S\subseteq \top(2 m_r):  \big|
\top(m_{r+1})\setminus S'\big|\leq \frac{\gamma}{320}\cdot m_r
\right)\geq 1- 4\delta/m_r
\end{equation}
\end{restatable}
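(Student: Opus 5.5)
The plan is to reduce the statement to a fixed-set concentration argument plus a union bound over the admissible input sets. These are the sets $S$ with $|S|=m_r$ and $\top(\lfloor\frac{11}{16}m_r\rfloor)\subseteq S\subseteq\top(2m_r)$. Each such $S$ is determined by choosing $m_r-\lfloor\frac{11}{16}m_r\rfloor\le \frac{5}{16}m_r+1$ elements out of at most $2m_r$. Hence there are at most $\binom{2m_r}{\lceil 5m_r/16\rceil}\le e^{c\,m_r}$ candidate sets for an absolute constant $c$ (roughly $c\le 0.8$). The aim is a per-set failure probability of order $e^{-c' m_r}$ with $c'$ well above $c$. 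Because $m_r\ge 4\cdot 320^3\gamma^{-5}\log n$, the union bound then totals at most $n^{-\Omega(1)}\cdot$poly$\le 4\delta/m_r$, using $\delta\ge n^{-2}$.

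\emph{The fixed-set event.} Fix one admissible $S$. Call $a_i\in\top(m_{r+1})$ \emph{lost} if $\proc(S)$ discards it. Since $\proc$ keeps the $m_{r+1}$ elements with the largest $\hat R$, a lost $a_i$ is beaten or tied in $\hat R$ by at least $m_{r+1}$ elements. Most of these must come from outside $\top(m_{r+1})$, or else lie among top elements that are themselves mis-estimated. The cleanest route is to compare with a threshold element of rank around $m_{r+1}+\frac{\gamma}{640}m_r$. By \Cref{lem:mugap}, a rank gap of $\frac{\gamma}{640}m_r$ yields a gap in $\mu$ of about $\gamma^2/320$. So if more than $\frac{\gamma}{320}m_r$ top elements are lost, at least $\Omega(\gamma m_r)$ elements carry an error $|\hat\mu_a(S)-\mu_a(S)|\ge \gamma^2/640$. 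Here the errors are either upward deviations of low elements or downward deviations of top elements.

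\emph{Bounding each error.} As in \cref{eq:decompose}, each error splits into a sampling part and the correction term $E_a(S)$. For fixed $S$, $E_a(S)$ is an average of $m_r-1$ independent zero-mean bounded variables. The sampling part is an average of $D$ independent comparisons given the graph, because the $N_a$ are sampled fresh. The difficulty is that a single deviation has probability only $\exp(-\Theta(\gamma^4 D))$, which is polynomial in $1/n$ and too weak against $e^{c m_r}$ sets. The remedy is to ask for \emph{many} simultaneous deviations. The sampling errors of distinct elements use independent $N_a$, so the probability that $\Omega(\gamma m_r)$ prescribed elements all deviate is at most $(e^{-\Theta(\gamma^4D)})^{\Omega(\gamma m_r)}$. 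Summing over the at most $2^{2m_r}$ choices of which elements deviate gives $e^{-\Omega(\gamma^5 C_\gamma m_r\log(m_r/\delta))}$, which is ample. The correction terms need a similar treatment: $\sum_{a\in A} E_a(S)$ over a subset $A$ is a sum over edges, and each edge appears at most twice. A Hoeffding bound on this aggregate, union-bounded over $A$, gives $e^{-\Omega(\gamma^4|A|m_r)}$, which is also ample.

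\emph{Main obstacle.} The step I expect to be hardest is the combinatorial reduction from "more than $\frac{\gamma}{320}m_r$ top elements lost" to "a linear-size set of elements with large \emph{aggregate} deviation". Several issues arise. Ties and the arbitrary tie-breaking must be handled. Pairs $(a,b)$ in which both $\hat\mu$'s err must be counted without double counting. The comparisons are persistent and shared between $N_a$ and $N_b$, and the edges between them may already have been fixed in earlier rounds. For the shared and pre-fixed edges, I would condition on the fixed-edge history and treat those edges as part of $E_a$. The argument then uses only the fresh randomness of the samples $N_a$ together with the independent Hoeffding bound for the yet-unfixed edges. The loss from edges fixed earlier is absorbed via $|S\setdiff\top(m_r)|$, as in \cref{secondterm}. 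Finally, the constant in $C_\gamma$ has to be checked to dominate the entropy $e^{c m_r}$.
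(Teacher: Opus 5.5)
Your main line is sound, but it is organized differently from the paper's proof.

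\textbf{How the paper does it.} The paper never counts individually mis-estimated elements. It pairs the $i$th element of $D=\top(m_{r+1})\setminus S'$ with the $i$th element of $K=S'\setminus\top(m_{r+1})$ (both of size $h$). \Cref{lem:mugap} then gives the aggregate gap $\sum_{D}\mu_j(S)-\sum_{K}\mu_i(S)\ge 2\gamma h^2/(m_r-1)$, which it sets against $\sum_K\hat\mu_i(S)\ge\sum_D\hat\mu_j(S)$. It controls $Er_K$ and $Er_D$ with one Hoeffding bound over the $hD_r$ sampling indicators and one over the $h(m_r-h)$ edge variables; edges inside $K$ cancel because $E_{ii'}=-E_{i'i}$. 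A union bound over the at most $2^{6m_r}$ triples $(S,K,D)$ finishes.

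\textbf{Comparison.} Your per-element route reaches exponents of the same order: $\gamma^5 m_rD_r$ for sampling and $\gamma^5m_r^2$ for the correction terms, the latter via the same antisymmetric cancellation. Your constants are somewhat worse but still close with the given $C_\gamma$. The pairing is shorter; your version is more transparent.

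\textbf{Your main obstacle is easy.} Let $\hat\theta=\min_{b\in S'}\hat\mu_b(S)$, so every discarded element has $\hat\mu\le\hat\theta$ under any tie-breaking. Let $c$ be the element of rank $m_{r+1}$ in $S$, and set $g=\frac{\gamma}{640}m_r$ and $\epsilon=2\gamma g/(m_r-1)$. If $\hat\theta\ge\mu_c(S)$, at least $h-g$ elements of $K$ (those of rank $\ge m_{r+1}+g$) have $\hat\mu-\mu\ge\epsilon$. Otherwise at least $h-g$ elements of $D$ (those of rank $\le m_{r+1}-g$) have $\hat\mu-\mu\le-\epsilon$. Center the threshold at rank $m_{r+1}$ as done here. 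With your threshold at rank $m_{r+1}+\frac{\gamma}{640}m_r$, the kept side loses $\frac{\gamma}{320}m_r$ elements, which is all you are guaranteed when $h$ is just above $\frac{\gamma}{320}m_r$.

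\textbf{The step that would fail.} Conditioning on the fixed-edge history and absorbing those edges via $|S\setdiff\top(m_r)|$ does not work, for three reasons.
\begin{enumerate}
\item The lemma assumes only $\top(\lfloor\frac{11}{16}m_r\rfloor)\subseteq S\subseteq\top(2m_r)$, so $|S\setdiff\top(m_r)|$ may be about $\frac{5}{8}m_r$.
\item That quantity does not bound how many edges inside $S$ were already queried. A survivor typically carries $\Theta(D_{r-1})$ such edges, and in the last rounds $D_r$ is comparable to $m_r$. Nothing in your argument makes this part of $E_a(S)$ small.
\item Once you condition on the history, that part is deterministic and selection-biased, so neither Hoeffding nor your union bound over $S$ says anything about it.
\end{enumerate}

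The conditioning is also unnecessary. Each edge's eventual orientation depends only on its own stream of private signals. So you may pre-sample the whole oriented graph $\G$, with independent edges and independent of the algorithm's coins. Then:
\begin{itemize}
\item for each fixed admissible $S$, $E_a(S)$ is a zero-mean sum over independent edges;
\item the union bound over the admissible family covers the adaptively chosen $\Gamma_r$, regardless of which edges were seen earlier;
\item conditional on $\G$, the samples $N_a$ are independent.
\end{itemize}
This is what the paper does, and it is what your paragraph on bounding each error already uses. Drop the conditioning and your argument closes.
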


\medskip 

Our first algorithm, \algone, is designed for the regime $k\in\Theta(\log n)$ and achieves success probability $1-O(1/n)$.

\begin{restatable}{theorem}{lognalg}
\label{thm:fistalg}
Algorithm \algone\ receives as input $\N$ and bias parameter $\gamma$, and outputs a set $\Gamma$ of size $m$ 
satisfying,
\[
m\in O_{\gamma}\left(
\log n
\right), ~\text{and }~
\mathbb{P}\left(\exists a\in \top\left(\lfloor m/3\rfloor\right)\text{ such that } a\notin \Gamma \right)\leq 1/n~.
\]
Equivalently, with probability at least $1-1/n$, $
\top\left(\lfloor m/3\rfloor\right)\subseteq \Gamma.
$
Let $t$ denote the total number of agents used by \algone, we have: 
$\mathbb{E}[t]\in O_{\gamma}\left(n \log n\right).$
\end{restatable}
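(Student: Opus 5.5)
We will derive \Cref{thm:fistalg} from \Cref{inductiveproof}, applied with $\delta=1/n$ to the sequence $\Gamma_0=\N,\Gamma_1,\dots,\Gamma_{\mathbf{r}}$ produced by \algone. The analysis has three parts: the size of the output, the containment guarantee, and the expected number of agents.

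\emph{Size and number of rounds.} Each call to \proc\ discards $\lfloor m_r/3\rfloor$ elements, so $m_{r+1}=\lceil 2m_r/3\rceil$. The loop stops at the first $r$ with $m_r\le L=(8\cdot320^3/\gamma^5)\log n$. Hence $\mathbf{r}=O(\log n)$, and the output size $m=m_{\mathbf{r}}$ satisfies $L/2\le m\le L$, so $m\in O_\gamma(\log n)$. Every round before termination has $m_r> L\ge 4\cdot320^3/\gamma^5\log n$, which meets the hypothesis of \Cref{inductiveproof}. We must also check the last transition: the hypothesis is only needed for the input set $\Gamma_{\mathbf{r}-1}$ to the final call, and that set has size larger than $L$. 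We therefore apply the inductive step up to $r=\mathbf{r}$.

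\emph{Correctness.} Statement (1) of \Cref{inductiveproof} at $r=\mathbf{r}$ gives $\top(\lfloor\frac{11}{16}m\rfloor)\subseteq\Gamma_{\mathbf{r}}$, and this contains $\top(\lfloor m/3\rfloor)$. The failure probability is at most $\frac{9\delta}{n}\sum_{i=0}^{\mathbf{r}}(3/2)^i\le \frac{27\delta}{n}(3/2)^{\mathbf{r}}$. Since $m_{\mathbf{r}}\approx(2/3)^{\mathbf{r}}n$ up to rounding, $(3/2)^{\mathbf{r}}=O(n/m)$, so the bound is $O(\delta/m)=O(1/(n\log n))$, which is at most $1/n$ for large $n$. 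This geometric-sum bookkeeping, in particular controlling the ceilings in $m_{r+1}=\lceil 2m_r/3\rceil$ so that $(3/2)^{\mathbf{r}}\le c\,n/m$, is the main point that needs care. The fallback is to absorb the constant by shrinking $\delta$ by a constant factor, which only changes $D$ by an additive constant.

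\emph{Query complexity.} In round $r$, \proc\ invokes \compare\ $m_rD_r$ times, where $D_r=C_\gamma\log(m_r n)=O_\gamma(\log n)$. Each call to \compare\ runs the biased walk on $\{-2,\dots,2\}$ starting from the current tally $s\in\{-1,0,1\}$, or from an already fixed edge, which costs $0$ agents. By the gambler's-ruin computation in \Cref{lem:herding:regular}, the expected absorption time is $O(1)$, uniformly, since it is at most the time from $0$ plus one. Wald's identity, or linearity of expectation, then gives an expected agent count of $O_\gamma\bigl(\sum_r m_r\log n\bigr)=O_\gamma(n\log n)$, because $\sum_r m_r\le 3n+O(\log n)$ by the geometric decay. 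The one subtlety here is that $N_i$ is adaptively chosen, but the per-call bound holds conditionally on the history, so linearity suffices.
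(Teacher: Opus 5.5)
Your proposal is correct and follows essentially the paper's route: invoke \Cref{inductiveproof} with $\delta=1/n$ at the final round $\mathbf{r}$, pass from $\top(\lfloor 11m/16\rfloor)$ to $\top(\lfloor m/3\rfloor)$, and bound the agent count by the $O(1)$ expected cost of each \compare\ call times $\sum_r m_rD_r=O_\gamma(n\log n)$. The differences are only in bookkeeping. The paper bounds the failure sum crudely as $9\delta\sum_r 1/m_r\le 9(\mathbf{r}+1)/(n\,C_\gamma\log n)$, while your geometric-sum estimate with $(3/2)^{\mathbf{r}}=O(n/m)$ is slightly tighter. Your conditional-expectation argument for the query count makes rigorous what the paper simply assumes.
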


\begin{proof}
The proof follows directly from \cref{inductiveproof}, which in fact shows a stronger statement. Let $\mathbf{r}$ be the final round of applying \proc. Set $r=\mathbf{r}$. 
Note that $\delta=1/n$ in \algone. Taking $C_\gamma=8*320^3/\gamma^5$, we have, 
\begin{align*}
    \mathbb{P}\left(
\top(\lceil\frac{11}{16} m_{\mathbf{r}}\rceil )\not\subset \Gamma_\mathbf{r}
\right)&\leq 9\delta\sum_{r=0}^\mathbf{r}\frac{1}{m_r}& \text{Statement 1 of \cref{inductiveproof} }\\
&\leq \frac{9}{n}\cdot \frac{\log n}{C_\gamma\log n }& \mathbf{r}\leq \log n, \delta=1/n, m_r\geq C_\gamma \log n\\
&\leq 1/n
\end{align*}

Therefore, the first part follows from $\top(\lfloor m/3\rfloor)\subseteq \top(\lfloor 11m/16\rfloor)$.

To see the second part of the theorem which bounds the number of learning agents, note that for any sampled pair to get fixed, we need to query a constant number of agents in expectation. To simplify the analysis we assume that this number is  fixed and bound the number of sampled pairs:
\begin{align*}
\text{total number of sampled pairs}&\leq \sum_{r=0}^{\log n}  m_rD_r\leq  \frac{8*320^3}{\gamma^5}\sum_{r=0}^{\log n} m_r \log (nm_r) \\
& 
\leq   
\frac{8*320^3}{\gamma^5}\sum_{r=0}^{\log n} \frac{2}{3}^r n \log\left(
n^2 (2/3)^r
\right)\\
&\leq \frac{8*320^3}{\gamma^5} n \sum_{r=0}^{\log n} (2/3)^r [2\log n + r\log (2/3)]\\
&\leq  \frac{24*320^3}{\gamma^5}n\log n \sum_{i=0}^\infty (2/3)^r\\
&\in O_{\gamma}(n\log n )~.
\end{align*}
\end{proof}


\subsubsection{Retrieving $\top(m)$ in $m\in O(\log \log n)$ regime } 
To retrieve the top-$m$ elements in the regime $m \in \Theta(\log \log n)$, we propose \algtwo{}. The algorithm first invokes \algone{} to obtain an initial set of size $\Theta(\log n)$. It then repeatedly executes \proc{} in an iterative fashion, analogous to \algone{}. Within \proc{}, the failure probability is governed by the parameter $\delta$, where the sample size for estimating item ranks satisfies $D_r \in \Theta(\log(m_r \delta^{-1}))$. Because the input set shrinks with each iteration, we configure $\delta = 1/\log n$. The formal guarantee is stated below, with the complete proof provided in the appendix:

\begin{restatable}{theorem}{thmalgtwo}\label{thm:secondalg}
Algorithm \algtwo\ receives as input $\N$ and bias parameter $\gamma$, and outputs a set $\Gamma$ of size $m$ 
satisfying,
\[
m\in O_{\gamma}\left(
\log \log n
\right), ~\text{and }~
\mathbb{P}\left(\exists a\in \top\left(\lfloor m/3\rfloor\right)\text{ such that } a\notin \Gamma \right)\leq 1/\log n ~.
\]
Equivalently, with probability at least $1-1/\log n$, $
\top\left(\lfloor m/3\rfloor\right)\subseteq \Gamma.
$

\medskip 

\noindent 
Let $t$ denote the total number of agents used by \algone, 
we have: 
\[
\mathbb{E}[t]\in O_{\gamma}\left(n \log n\right)+ O_{\gamma}\left(\log n \log \log n\right).
\]
\end{restatable}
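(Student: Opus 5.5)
The plan is to treat \algtwo{} as two phases and reuse \cref{inductiveproof} for the second phase with $\delta=1/\log n$.

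\emph{Phase 1} is the call to \algone. By \cref{thm:fistalg} and the stronger statement established in its proof (statements 1 and 2 of \cref{inductiveproof} at the final round $\mathbf{r}$), with probability at least $1-1/n$ the output $\Gamma_{\mathbf{r}}$ has size $m_{\mathbf{r}}=\Theta_\gamma(\log n)$. Moreover it satisfies $\top(\lfloor\frac{11}{16}m_{\mathbf{r}}\rfloor)\subseteq\Gamma_{\mathbf{r}}\subseteq\top(2m_{\mathbf{r}})$ and $|\Gamma_{\mathbf{r}}\setdiff\top(m_{\mathbf{r}})|\le\frac{\gamma}{80}(m_{\mathbf{r}}-1)$. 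These are exactly the invariants needed to restart the induction.

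\emph{Phase 2} starts from this set and runs \proc{} with $\delta=1/\log n$. I would rerun the induction of \cref{inductiveproof} verbatim, but with $n$ replaced by $N:=m_{\mathbf{r}}\in\Theta_\gamma(\log n)$ in the places where the size lower bound enters. The main obstacle is the threshold condition $m_r\ge C_\gamma\log n$, used in \cref{firstbound} and in \Cref{lem:setdif}. In phase 2 we only have $m_r\ge L=C_\gamma\log\log n+C_\gamma\log C_\gamma$.

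To handle this, I would check each place the threshold is used.
\begin{itemize}
\item In Step 1, the term $4m_r\exp(-\gamma^2 m_r/6400)$ must be at most $\delta/m_r$. With $m_r\ge L$ this is $\lesssim m_r(\log n)^{-512/\gamma^3}\cdot C_\gamma^{-\Theta(1)}$. That bound is at most $1/(m_r\log n)$ because $m_r\le N=O_\gamma(\log n)$; this is precisely what the additive $\log C_\gamma$ term in $L$ absorbs.
\item The Hoeffding term of Step 1 only uses $D=C_\gamma\log(m_r/\delta)$, so it is unchanged.
\item For \Cref{lem:setdif}, the union bound is over candidate sets $S$ with $\top(\lfloor\frac{11}{16}m_r\rfloor)\subseteq S\subseteq\top(2m_r)$. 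There are at most $2^{O(m_r)}$ of them, and $D_r\gtrsim\log(m_r\log n)$ still dominates once $m_r\ge L$. So the lemma goes through with failure probability $4\delta/m_r$.
\end{itemize}
The induction then yields a per-round failure probability of $9\delta/m_r$ in phase 2.

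For the \emph{success probability}, phase 2 has at most $\log_{3/2}(N/L)=O(\log\log n)$ rounds, and each has $m_r\ge L$. So the total phase-2 failure is $\sum_r 9/(m_r\log n)\le 9\,O(\log\log n)/(L\log n)$. By the choice of the constant in $L$ (and $C_\gamma\ge 9$), this is at most $\frac{1}{2\log n}$. Adding the $1/n$ from phase 1 gives at most $1/\log n$ for large $n$. The final set satisfies $\top(\lfloor\frac{11}{16}m\rfloor)\subseteq\Gamma$, which contains $\top(\lfloor m/3\rfloor)$, and $m\le L=O_\gamma(\log\log n)$.

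For the \emph{query complexity}, each \compare{} call uses $O_\gamma(1)$ agents in expectation (gambler's ruin hitting time, as in \Cref{lem:herding:regular}). Phase 1 costs $O_\gamma(n\log n)$ by \cref{thm:fistalg}. Phase 2 costs $\sum_r m_rD_r\le C_\gamma\sum_r (2/3)^rN\log(N\log n)=O_\gamma(\log n\log\log n)$ by the geometric sum, as claimed.
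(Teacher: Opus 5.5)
Your proposal is correct and takes the same route as the paper. You run \algone{} (failure at most $1/n$), then restart the induction of \cref{inductiveproof} with $\delta=1/\log n$ under the weaker threshold $m_r\ge L$, re-checking the Step~1 union bound and \Cref{lem:setdif}. You sum the per-round failures $9\delta/m_r$ to get at most $1/(2\log n)$, and bound the second phase by a geometric sum of $m_r\log(m_r\log n)$ terms. You are in fact more careful than the paper in restarting from the invariants that hold at the end of \algone{}, rather than from $\Gamma_0=\N$.

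One small point: in \Cref{lem:setdif} the $D_r$ term only needs $m_r$ above an absolute constant. The term that genuinely uses $m_r\ge L$, including its $C_\gamma\log C_\gamma$ part, is the graph-randomness term $2^{6m_r+2}\exp\bigl(-(2/C_\gamma)m_r^2\bigr)$. You should check that term explicitly against $\delta/m_r=1/(m_r\log n)$ rather than only noting that $D_r$ dominates.
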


\begin{remark}
In \Cref{thm:fistalg,thm:secondalg}, we analyzed the performance of our algorithms with respect to their query complexity. It is worth noting that the runtime complexity of both algorithms is governed by essentially the same bounds. Indeed, due to their simple design, the algorithms incur minimal computational overhead and, on average, require only a constant amount of computation per agent.
\end{remark}

\subsubsection{Lower bounds and the impossibility of retrieving $\top(m)$ for $m\in \Theta(1)$}

We now present our lower bound result
demonstrating that  the above tradeoff between the probability of success and the size of $k$ for top-$k$ recovery is tight. Our construction should remind the reader of the lower bound result presented in \cite{geissmann2017sorting} which we here tailor to our  problems. 

\smallskip

To show the lower bounds  we focus on specific family of private signal probability matrices which follow SST: Given constant $0<\gamma<1/2$, for a given hidden permutation $\sigma$, we define $\Ps(\sigma)$ as follows: for all $i<j$, let $p_{\sigma(i)\sigma(j)}=1/2+\gamma$ and $p_{\sigma(j)\sigma(i)}=1/2-\gamma$. 
Throughout this section, when we refer to an \emph{algorithm} we mean an algorithm as described bellow:


Let $Alg$ be  an arbitrary algorithm 
which without knowing  $\Ps(\sigma)$ or $\sigma$ (but maybe knowing $\gamma$), samples 
complete random graph $\G$  generated by the sequential learning process w.r.t private signal probability matrix $\Ps(\sigma)$ and processes $\G$ with the goal to retrieve the top $m$ elements of the hidden permutation $\sigma$, i.e.,  $\top_{\sigma}(m)=\{\sigma(1),\sigma(2),\dots , \sigma(m)\}$.   The following lemma shows a tradeoff between size of $m$ and the correctness probability of any such $Alg$.

\begin{lemma}
\label{lem:lowerbound1}
For any given $0<\gamma<1/2$, consider an arbitrary algorithm $Alg$ as described above with output size   $m$. 
There exists constant $c_\gamma<1$ and  a hidden permutation $\sigma$  such that    with probability at least $c_\gamma^{m}/(m+1)$ the output of $Alg$ does not contain $\sigma(1)$.
\end{lemma}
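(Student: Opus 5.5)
Under $\Ps(\sigma)$, \cref{lem:herding:regular} says that once global herding has occurred, $\G$ is a complete tournament whose edges are oriented independently. Every correctly ordered pair $\sigma(i)\succ\sigma(j)$ with $i<j$ is fixed in the right direction with probability $q\doteq p^*=1/(1+\zeta^2)$, where $\zeta=(1/2-\gamma)/(1/2+\gamma)$, and in the wrong direction with probability $1-q$. Here $q$ is a constant in $(1/2,1)$. The plan is a symmetrization (indistinguishability) argument in the style of \cite{geissmann2017sorting}. We exhibit an event of probability at least $c_\gamma^m$ on which the top element is statistically indistinguishable from $m$ other elements, so any algorithm that outputs only $m$ items must miss it with conditional probability at least $1/(m+1)$.

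\textbf{Construction.} Fix a base permutation $\sigma_0$ and consider its top $m+1$ elements $B=\{b_0,\dots,b_m\}$, where $b_0=\sigma_0(1)$. For each $\ell\in\{0,\dots,m\}$, let $\sigma_\ell$ be the permutation that moves $b_\ell$ to the first position and keeps the relative order of everything else. Draw $\ell$ uniformly at random and generate $\G$ under $\Ps(\sigma_\ell)$. Define the event $\mathcal E$ that $b_\ell$ beats every other element of $B$ in $\G$ and that, within $B$, all other edges are oriented according to the order of $\sigma_0$. The edges from $B$ to the rest of the items follow their true direction in every $\sigma_\ell$, because $B$ stays on top, so they carry no information about $\ell$.

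\textbf{Likelihood computation.} Condition on the restriction of $\G$ to $B$ being a fixed tournament $T$. The likelihood of $T$ under $\sigma_\ell$ is $q^{\#\mathrm{agree}}(1-q)^{\#\mathrm{disagree}}$. The edges outside $B$ have the same law for every $\ell$. Instead of insisting on a perfectly symmetric $T$, I would argue as follows. For any realized graph, the posterior probability of each $\ell$ is proportional to $((1-q)/q)^{d_\ell}$, where $d_\ell$ is the number of edges of $T$ that disagree with $\sigma_\ell$. Take $T$ to be the tournament in which the elements of $B$ are ordered by $\sigma_0$. For this $T$ one has $d_\ell=\ell$, so the likelihood ratio between any two hypotheses is at most $((1-q)/q)^{-m}$. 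Any algorithm outputs a set $\Gamma$ of size $m$, which must omit some $b_{\ell'}$. Hence the posterior probability that $\sigma_\ell(1)=b_\ell\notin\Gamma$ is at least $\min_\ell \Pr(\ell\mid T)\ge c^m/(m+1)$, where $c=(1-q)/q$. Finally, $\Pr(T)$ itself is at least $(1-q)^{m}q^{\binom{m+1}{2}}$ under $\sigma_\ell$, which looks too small because of the quadratic exponent.

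\textbf{Main obstacle.} The difficulty is the probability of the conditioning event: requiring a specific tournament on $B$ costs $q^{\Theta(m^2)}$, not $c^m$. To fix this, I would only require the top candidate's star to look symmetric, and average over the internal tournament of $B\setminus\{b_\ell\}$, whose law is identical across $\ell$ up to relabeling. Concretely, I would compare $\sigma_\ell$ against $\sigma_0$ through the change of measure that flips only the $\le m$ edges incident to $b_0$ and $b_\ell$ whose orientation differs. The Radon–Nikodym derivative is then at least $((1-q)/q)^{m}$ on the whole space, not just on one event. This gives $\Pr_{\sigma_\ell}(b_\ell\notin\Gamma)\ge c^m\Pr_{\sigma_0}(b_\ell\notin\Gamma)$. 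Averaging over $\ell$ and using $\sum_\ell\Pr_{\sigma_0}(b_\ell\notin\Gamma)\ge 1$, since $|\Gamma|=m<m+1$, some $\ell$ satisfies $\Pr_{\sigma_\ell}(\sigma_\ell(1)\notin\Gamma)\ge c^m/(m+1)$, with $c_\gamma=(1-q)/q=\zeta^2<1$. The step I would scrutinize most is this change-of-measure bound, checking that exactly the edges between $b_\ell$ and $b_0,\dots,b_{\ell-1}$ (at most $m$ of them) change law between $\sigma_0$ and $\sigma_\ell$.
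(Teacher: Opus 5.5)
Your change-of-measure argument in the last paragraph is correct and is essentially the paper's proof: bound the likelihood ratio between a base permutation and $m$ alternatives that each put a different element of the top $m+1$ first, then use that a size-$m$ output misses at least one of these $m+1$ candidates and average. The only difference is that you move $b_\ell$ to the front instead of swapping it with the top element; this gives $\ell\le m$ disagreeing pairs instead of $2j-3\le 2m-1$, hence the slightly better constant $c_\gamma=\zeta^2$ in place of the paper's $\big(\frac{1/2-\gamma}{1/2+\gamma}\big)^5$, and your earlier detour of conditioning on a fixed tournament can be dropped.
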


\begin{proof}
We consider a deterministic version of $Alg$, which is  for a given  sampled graph $\G$, we fix all the other  
 the random bits used  during the performance of the algorithm when it processes $\G$. Therefore, the outcome of the algorithm on a fixed sampled $\G$ is the same. 

Let's define $\sigma_1=a_1\succ a_2\succ \dots \succ a_n$. For $j=2,3\dots m+1$ we define $\sigma_j$ be the same ordering as $\sigma_1$ but $\sigma_1(1)$ and $\sigma_1(j)$ swapped, i.e., $\sigma_j=a_j\succ a_2\succ a_3\succ \dots a_{j-1}\succ a_1\succ a_{j+1} \dots \succ a_n$.

First we bound  the probability that a given graph $\G$ is sampled from $\sigma_1$ versus its probability when sampled from $\sigma_j$: Note that the two permutations agree on all pairwise comparisons  expect $a_1$ with $a_\kappa$ and $a_j$ with $a_\kappa$ for all $\kappa\in 2,3,\dots j-1$ and $a_i,a_j$. Therefore, there are at most $2(j-2)+1=2j-3\leq 2m-1$ pairwise disagreements between the two permutations.

Consider a graph $\G$ and an edge in it. If the two permutation rank the end points of that edge in the same way the ratio of  the probability that it gets fixed under $\Ps(\sigma_a)$ and $\Ps(\sigma_j)$ as in $\G$ is the same. If they disagree, in one of them the edge will be sampled with probability $1/2-\gamma$ from one and with $1/2+\gamma$ from the other one. Using \cref{eq:fixprob} the probability that get fixed in the give direction w.r. to one matrix is $(1/2+\gamma)^2/[(1/2+\gamma)^2+(1/2-\gamma)^2]$ and w.r.t. another one it is $(1/2-\gamma)^2/[(1/2+\gamma)^2+(1/2-\gamma)^2]$. 

Therefore, denoting the probability that $\G$ is sampled from $\sigma_1$ by $\mathbb{P}_{\sigma_1}(\G)$ and the probability that  $\G$ is sampled from $\sigma_j$ by $\mathbb{P}_{\sigma_j}(\G)$, we have:

\[
\forall \G, \quad \left(\frac{1/2-\gamma}{1/2+\gamma}\right)^{2\cdot (2m-1)}\leq \frac{\mathbb{P}_{\sigma_1}(\G)}{\mathbb{P}_{\sigma_j}(\G)}\leq \left(\frac{1/2+\gamma}{1/2-\gamma}\right)^{2\cdot (2m-1)}
\]

Consider any arbitrary set $S$ with size $m$ and let $G$ be the set of all graphs $\G$ on which $S=Alg(\G)$. 
For any $j=1,2,\dots, m+1$ let $\mathbb{P}_{\sigma_j}(S)$ be the probability that $Alg$ outputs $S$. We have:
\[
\mathbb{P}_{\sigma_j}(S)=\sum_{\G\in G}\mathbb{P}_{\sigma_j}(\G)\geq \sum_{\G\in G}\big(\frac{1/2-\gamma}{1/2+\gamma}\big)^{2\cdot (2m-1)}\mathbb{P}_{\sigma_1}(\G)
=
\big(\frac{1/2-\gamma}{1/2+\gamma}\big)^{2\cdot (2m-1)} \mathbb{P}_{\sigma_1}(S)
\]

For any such output at least one of $a_1,a_2,\dots a_{m+1}$ is not in $S$. Thus, 
\begin{align*}
   \sum_{j=1}^{m+1} \mathbb{P}_{\sigma_j}\big(\sigma_j(1)\notin S\big)\geq 
   \big(\frac{1/2-\gamma}{1/2+\gamma}\big)^{2\cdot (2m-1)}
   \sum_{j=1}^{m+1} \mathbb{P}_{\sigma_1}\big(\sigma_j(1)\notin S\big)\geq    \big(\frac{1/2-\gamma}{1/2+\gamma}\big)^{2\cdot(2m-1)}~.
\end{align*}

Therefore, there is at least one $j$ such that 
\begin{align*}
  \mathbb{P}_{\sigma_j}\big(\sigma_j(1)\notin S\big)\geq    \big(\frac{1/2-\gamma}{1/2+\gamma}\big)^{2\cdot (2m-1)}/(m+1)~.
\end{align*}

Take $c_\gamma= \big(\frac{1/2-\gamma}{1/2+\gamma}\big)^5$ and we get the premise.
\end{proof}


\begin{corollary}
In our setting and by using only regular agents, and for any constant $C$, there is no algorithm  with failure probability bounded by $O(1/n)$ that outputs $S$  of size $m=o(\log n)$ containing $\top\left(\lceil m/C \rceil\right)\subseteq S$. 
Moreover,
there is no algorithm with failure probability bounded by $O(1/\log n)$ that outputs $S$  of size $m=o(\log \log n)$ containing $\top\left(\lceil m/C \rceil\right)\subseteq S$. 
\end{corollary}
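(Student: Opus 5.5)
The corollary should follow from \Cref{lem:lowerbound1} by contradiction, by checking when the lemma's lower bound on the failure probability beats the claimed upper bound. First I would reduce to the case of missing $\sigma(1)$. If $m\ge 1$ then $\lceil m/C\rceil\ge 1$, so $\sigma(1)\in\top(\lceil m/C\rceil)$. Hence any algorithm with $\top(\lceil m/C\rceil)\subseteq S$ with probability at least $1-\epsilon$ also contains $\sigma(1)$ with probability at least $1-\epsilon$, for every hidden permutation $\sigma$ in the SST family $\Ps(\sigma)$ used in the lemma. This family satisfies Assumption~\ref{assumption} with constant-error probabilities. So it suffices to show that no algorithm restricted to regular agents can output $\sigma(1)$ with the stated failure probability when $m$ is too small.

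Next I would check that \Cref{lem:lowerbound1} applies to adaptive algorithms that only query regular agents. Such an algorithm only ever observes fixed edges of $\G$: regular agents reveal nothing beyond the herding outcome of each pair, whose distribution is given by \cref{eq:fixprob} independently across pairs. So its output is a (randomized) function of a partial view of $\G$. The lemma's proof derandomizes and compares $\mathbb{P}_{\sigma_1}(\G)$ with $\mathbb{P}_{\sigma_j}(\G)$, and it applies verbatim to any function of $\G$, so adaptive queries are covered. I would state this explicitly, since it is the only point where the meaning of ``using only regular agents'' matters.

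The quantitative step is then direct. By the lemma, some $\sigma$ gives failure probability at least $c_\gamma^m/(m+1)=\exp(-m\ln(1/c_\gamma)-\ln(m+1))$. If $m=o(\log n)$, this is $n^{-o(1)}$, which exceeds $K/n$ for any constant $K$ once $n$ is large, contradicting failure $O(1/n)$. If $m=o(\log\log n)$, it is $(\log n)^{-o(1)}$, which exceeds $K/\log n$ for large $n$, contradicting failure $O(1/\log n)$.

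The main obstacle is the justification in the second paragraph that adaptive selection does not escape the lemma; the asymptotic comparison is routine. I would also treat the edge case $m=0$ separately, where $\sigma(1)$ is trivially missed. Finally, I would note that the lemma needs $m+1\le n$, which holds in both regimes.
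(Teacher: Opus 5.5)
Your proposal is correct and follows the argument the paper leaves implicit (the corollary has no written proof): reduce $\top(\lceil m/C\rceil)\subseteq S$ to $\sigma(1)\in S$, apply \Cref{lem:lowerbound1}, and note that $c_\gamma^m/(m+1)$ is $n^{-o(1)}$ when $m=o(\log n)$ and $(\log n)^{-o(1)}$ when $m=o(\log\log n)$. One correction to your adaptivity paragraph: pre-herding actions of regular agents do reveal their private signals, so they reveal more than the fixed orientation; but each return to $0$ has probability $p(1-p)$ under both $\Ps(\sigma_1)$ and $\Ps(\sigma_j)$, so conditional on the orientations the (possibly partial) trajectories have the same law under every $\sigma_j$, and they can be absorbed into the random bits that the lemma's proof fixes.
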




From \Cref{lem:lowerbound1} we can also conclude that using only regular learning agents  no algorithm can output $\top(m)$ for constant $m$ and  guarantee a $o(1)$ failure probability.
In addition we show that following stronger lemma which has a similar proof presented in the appendix: 

\begin{restatable}{lemma}{lowerboundconstant}
\label{thm:lower:constant} Let  $0<k \leq n/2$ be a constant. 
For any given $0<\gamma<1/2$, consider an arbitrary algorithm $Alg$ as described above with output size   $k$. 
There exists constant $c_\gamma<1$ and  a hidden permutation $\sigma$  such that    with probability at least $c_{\gamma,k}$ the output of $Alg$, $S$ satisfies: $S\cap \top(k)=\emptyset$.
\end{restatable}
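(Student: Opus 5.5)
We adapt the change-of-measure argument of \Cref{lem:lowerbound1}, swapping the whole top block instead of a single item. As there, we first make $Alg$ deterministic by fixing its internal random bits, so that its output $S=Alg(\G)$ depends only on the sampled complete graph $\G$. The base permutation is $\sigma_0=a_1\succ a_2\succ\dots\succ a_n$.

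\textbf{Construction.} Partition the first $2k$ items (possible since $k\le n/2$; if $k$ is small relative to the block count we use more items) into disjoint blocks $B_0=\{a_1,\dots,a_k\}, B_1,\dots,B_M$ of size $k$ each, where $M$ is a constant we will choose depending on $k$. For $\ell=0,\dots,M$, let $\sigma_\ell$ be the permutation obtained from $\sigma_0$ by moving block $B_\ell$, in its internal order, to the top positions $1,\dots,k$, keeping every other item in its relative order. We would rather use $M=\lceil n/k\rceil-1$ blocks, but the likelihood ratio must stay bounded by a constant, so we take $M=k$ and choose the blocks among $a_1,\dots,a_{k(k+1)}$. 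This works when $n\ge k(k+1)$, which holds for $n$ large since $k$ is constant; otherwise the statement is about a bounded $n$ and is handled by taking $c_{\gamma,k}$ small. Thus $\top_{\sigma_\ell}(k)=B_\ell$.

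\textbf{Likelihood ratio.} Each pair of permutations $\sigma_\ell,\sigma_{\ell'}$ differs only on pairs with both endpoints among the first $k(k+1)$ items. The number of disagreeing pairs is therefore at most $\binom{k(k+1)}{2}=:Q_k$, a constant. Edges are fixed independently, with probabilities given by \cref{eq:fixprob}. So, exactly as in the proof of \Cref{lem:lowerbound1}, for every $\G$
\[
\mathbb{P}_{\sigma_\ell}(\G)\ \ge\ \Big(\frac{1/2-\gamma}{1/2+\gamma}\Big)^{2Q_k}\,\mathbb{P}_{\sigma_0}(\G).
\]
Summing over the graphs on which $Alg$ returns a given set yields the same inequality for $\mathbb{P}_{\sigma_\ell}(S)$.

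\textbf{Pigeonhole.} Any output $S$ has $|S|=k$, while the blocks $B_0,\dots,B_k$ are $k+1$ disjoint sets. Hence $S$ is disjoint from at least one block $B_\ell$. Consequently
\[
\sum_{\ell=0}^{k}\mathbb{P}_{\sigma_0}(S\cap B_\ell=\emptyset)\ \ge\ 1.
\]
Applying the change of measure term by term gives
\[
\sum_{\ell}\mathbb{P}_{\sigma_\ell}\bigl(S\cap \top_{\sigma_\ell}(k)=\emptyset\bigr)\ \ge\ \Big(\frac{1/2-\gamma}{1/2+\gamma}\Big)^{2Q_k},
\]
so some $\ell$ satisfies
\[
\mathbb{P}_{\sigma_\ell}\bigl(S\cap\top_{\sigma_\ell}(k)=\emptyset\bigr)\ \ge\ c_{\gamma,k}:=\Big(\frac{1/2-\gamma}{1/2+\gamma}\Big)^{2Q_k}\big/(k+1).
\]
Relabelling the items so that $\sigma_\ell$ becomes the identity gives the statement.

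\textbf{Main obstacle.} The delicate step is keeping the number of disagreeing edges between any $\sigma_\ell$ and the reference $\sigma_0$ bounded by a constant independent of $n$. Moving a block from deep in the order to the top would create $\Theta(n)$ disagreements and destroy the bound. This is why all blocks must sit within the first $O(k^2)$ positions, and why the resulting constant degrades badly with $k$, matching the claim that it only holds for constant $k$. A secondary point is checking that the change of measure is valid even though $Alg$ samples $\G$ adaptively. This is fine because the edge orientations are fixed independently of which edges are queried, so the probability of any transcript factorizes over the queried edges and the same ratio bound applies.
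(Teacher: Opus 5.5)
Your argument is correct for $n\ge k(k+1)$ and follows the paper's change-of-measure template: derandomize $Alg$, bound the per-edge likelihood ratio by $\bigl(\frac{1/2-\gamma}{1/2+\gamma}\bigr)^{2}$ via \cref{eq:fixprob}, and average over a family of alternative permutations. The difference is the family you average over.

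The paper moves each of the $\binom{2k}{k}$ $k$-subsets of $\{a_1,\dots,a_{2k}\}$ to the top. Since $|S|=k$, at least $k$ of these $2k$ items lie outside $S$, so some alternative has its top set disjoint from $S$; this is a covering argument. Every alternative disagrees with the identity on at most $k^2$ pairs, which gives $c_{\gamma,k}=\bigl(\frac{1/2-\gamma}{1/2+\gamma}\bigr)^{2k^2}/\binom{2k}{k}$ for every $n\ge 2k$.

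You instead use $k+1$ disjoint blocks and a pigeonhole argument. This cuts the number of hypotheses to $k+1$, but the blocks must reach down to depth $k(k+1)$. As a result the disagreement count grows (your $\binom{k(k+1)}{2}$; in fact $k\cdot k^2=k^3$ suffices, since each block item jumps over at most $k^2$ others). You also need $n\ge k(k+1)$ rather than just the stated hypothesis $k\le n/2$. For constant $k$, both routes give constants depending only on $\gamma$ and $k$.

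The one soft spot is the range $2k\le n<k(k+1)$. Saying ``taking $c_{\gamma,k}$ small'' is not an argument by itself. For each such $n$ you still need a finite family of alternatives, with bounded likelihood ratios to $\sigma_0$, whose top-$k$ sets cover every possible output $S$. The paper's family of $k$-subsets of $\{a_1,\dots,a_{2k}\}$ is exactly such a cover, and it handles all $n\ge 2k$ at once. This also explains your opening sentence: partitioning the first $2k$ items yields only two blocks, and for $k\ge 2$ a $k$-set can meet both. That is why the paper uses all $k$-subsets there rather than a partition.

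Finally, the closing relabelling is unnecessary. The statement is existential in $\sigma$, so you can simply take $\sigma=\sigma_\ell$; relabelling would in any case change $Alg$.
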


\subsection{Top-$k$ Retrieval with Stubborn Agents}\label{sec:Alg:private}

In this section, we study the top-$k$ retrieval problem in the regime where $k=\Theta(1)$. Recall that \Cref{lem:lowerbound1} and \Cref{thm:lower:constant} establish the impossibility of solving this problem using only 
regular learning agents. To overcome this limitation, we leverage \emph{stubborn} agents, whose actions help improve the accuracy of local estimates. 
Unlike regular agents, stubborn agents base their actions \emph{solely} on their own private signals. We denote the action of a stubborn agent $v_t$ conditioning on that nature sampled $(i,j)$ as $\act_{\stub}(v_t\mid i,j)$ and as $\act_{\stub}(v_t)$ without this condition.  Therefore, 
$ \act_{\stub}(v_t)=\sig(v_t)$. In our setting we also assume that stubborn agents know other stubborn agents.  
We invoke the procedure \compareii\ (\Cref{alg:compare2}) throughout our algorithm. Given a pair $i,j$ and a repetition parameter $\theta$, \compareii\ presents   $(a_i,a_j)$
to $\theta$ consecutive stubborn agents and outputs the inference made by the  agents who come after this sequence. We prove that:


\begin{restatable}{lemma}{stubborn}
\label{lem:herding:stubborn}
   Consider a pair $i<j\in [n]$, thus $a_i\succ a_j$ in the true order. Let $0<\delta<1$ be an arbitrary failure probability. For $\theta\geq \ln(\delta^{-1})/2\gamma^2$ we have that the outcome of \compareii\ is the true direction $a_i\succ a_j$ with probability at least $1-\delta$.
\end{restatable}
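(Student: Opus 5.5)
The plan is to reduce the statement to a Hoeffding bound on the sum of the $\theta$ independent private signals collected by \compareii, and then to argue that the returned value is indeed the inference that regular agents following the stubborn block will adopt.

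\emph{Step 1: setting up the sum.} Fix $i<j$, so that $a_i\succ a_j$. For $\kappa=1,\dots,\theta$, let $Z_\kappa=+1$ if the $\kappa$-th stubborn agent reports $a_i\succ a_j$, and $Z_\kappa=-1$ otherwise. Since $\act_{\stub}(v_t)=\sig(v_t)$, and private signals depend only on the sampled pair and are drawn independently across agents, the $Z_\kappa$ are i.i.d. with values in $[-1,1]$. Their mean is
\[
\mathbb{E}[Z_\kappa]=2p_{ij}-1>2\gamma.
\]
The counter $s$ in \compareii\ equals $-\sum_\kappa Z_\kappa$. The procedure therefore returns the wrong direction $a_i\prec a_j$ only if $\sum_\kappa Z_\kappa<0$; the tie case $s=0$ returns $a_i\succ a_j$, which is the true direction.

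\emph{Step 2: Hoeffding.} We have
\[
\mathbb{P}\Big(\sum_{\kappa} Z_\kappa\le 0\Big)\le \mathbb{P}\Big(\sum_\kappa Z_\kappa-\theta\,\mathbb{E}[Z_1]\le -2\gamma\theta\Big)\le \exp\!\big(-2(2\gamma\theta)^2/(4\theta)\big)=\exp(-2\gamma^2\theta).
\]
Equivalently, apply Hoeffding to the Bernoulli indicators of correct signals, whose mean is at least $1/2+\gamma$, with deviation $\gamma$; this gives $\exp(-2\theta\gamma^2)$. With $\theta\ge \ln(\delta^{-1})/2\gamma^2$, this failure probability is at most $\delta$.

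\emph{Step 3: subsequent regular agents adopt $s$.} This is the modelling step and I expect it to be the main obstacle. Since stubborn agents are known, a Bayesian regular agent arriving afterwards treats their actions as exactly their signals. The $\theta$ stubborn reports are therefore independent informative signals. Their likelihood ratio is $\big(p_{ij}/(1-p_{ij})\big)^{|s|}$ in favour of the majority, and this dominates the agent's own single signal whenever $|s|\ge 2$. When $|s|\le 1$ the situation resembles the original cascade analysis, so I would argue via the \ref{eq:herdingcond} reasoning that the next agents' tallies push the process to a fixed direction. One simple route is to choose $\theta$ of appropriate parity and observe that $|s|\ge 2$ except on an event already covered by the same Hoeffding bound. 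The outcome of \compareii\ is then the majority direction, which is the true direction with probability at least $1-\delta$.
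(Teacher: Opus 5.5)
Your Steps 1--2 are the paper's proof: it applies Hoeffding to $S_1-S_2=\sum_\kappa Z_\kappa$ with deviation $2\gamma\theta$ to get $\mathbb{P}(S_2\ge S_1)\le \exp(-2\gamma^2\theta)\le\delta$, and since \compareii\ as written returns the sign of $s$ (ties going to $a_i\succ a_j$), this already completes the lemma. The paper does not attempt your Step 3 and it is not needed; as sketched it would also be incomplete, because a later regular agent also sees earlier regular-agent actions on the same pair (in \algstar\ such pairs have typically already been fixed during \algone, which offsets the informative tally by up to $2$), so $|s|\ge 2$ alone does not guarantee that agent adopts the stubborn majority.
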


\begin{proof}
Under Assumption \ref{assumption},  we have that for any $i<j$, $p_{ij}-p_{ji}\geq 2\gamma$. Let $S_1$ be the number of times that an agent's private signal is $a_i\succ a_j$ and let $S_2$ denote the number of counts for the opposite signal. 
Note that $\mathbb{E}[S_1-S_2]\geq 2\gamma\cdot \theta$, thus, using Hoeffding bound we have:
\[
\mathbb{P}\left(
S_2\geq S_1
\right)\leq \mathbb{P}\left(
 S_1- S_2-\mathbb{E}[S_1-S_2] \leq -2 \theta\cdot \gamma 
\right)\leq \exp\left(
-2\gamma^2\theta
\right)\leq \delta~. 
\]
\end{proof}

A pseudocode of our proposed solution, \algstar\ (\textsc{STubborn Agent top-$k$ Retrieval}), is presented in \cref{alg:star}.

\begin{algorithm}
\caption{\algstar}\label{alg:star}
{\small
\begin{algorithmic}[1]

\STATE \textbf{Input:}  $\N$, $\gamma$, $\theta$, $\delta$, $k$ 
\STATE $\Gamma=\algone(\N,\gamma)$
\STATE \textbf{if} $|\Gamma|\leq k$ \textbf{then}, \textbf{return} $\Gamma$
\STATE $\theta=\ln(\vert \Gamma\vert^2\cdot \delta^{-1})/2\gamma^2$
\FOR{$i\in \Gamma$}
\STATE $\hat{R}(i)=0$
\FOR{$j\in \Gamma\setminus\{i\}$}
\STATE \textbf{if} 
$ \compareii(i,j)=a_j\prec a_i$ \textbf{then} $\hat{R}(i)++$
\ENDFOR
\ENDFOR
\STATE sort $\Gamma$ w.r.t. $\hat{R}$ and let $T$ be the top $k$ elements
\RETURN $T$
\end{algorithmic}
}
\end{algorithm}

Our proposed algorithm in this section is 
\algstar: \textsc{Stubborn Agent Top-$k$ Retrieval}. \algstar\ begins by invoking \algone\ to obtain a set of size $\Theta(\log n)$ that contains all top-$k$ elements. It then sets the repetition parameter $\theta$ for \compareii\ so that the estimated rank of each element is sufficiently close to its true rank. 
After estimating the ranks of all items output by \algone, the algorithms ranks them based on estimated ranks and outputs $k$ elements with highest ranks.

We note that the output of our previous algorithms \algone\ and \algtwo\ always contained some false positive errors. Here in contrast we show that \algstar\ outputs the top-$k$ list with high probability excluding all false positives and false negatives. 

\medskip 

The following theorem  presents theoretical guarantees for the performance of \algstar:

\begin{restatable}{theorem}{STAoR}
\label{thm:staralgo}
Fix $k$, for large enough $n$,
Algorithm \algstar\ receives set $\N$, parameters: $\gamma$,  $k$ and $\delta\geq 1/n$ as input, and it outputs a set $T$ of size $k$ such that,
$\mathbb{P}\left(S=\top(k)\right)\geq 1- 2\delta$.
Let $t_\mathrm{r}$ and $t_\mathrm{s}$ respectively denote the total number of regular and stubborn agents queried by \algstar\ to output $T$. We have that:
\[
\mathbb{E}[t_\mathrm{r}]\in O_\gamma\left(n\log^2(n)\right) \text{, and } t_\mathrm{s}\in O_\gamma\left(\log^2(n)[\log\log(n)+\log(\delta^{-1})]\right)~.
\]
\end{restatable}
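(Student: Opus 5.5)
The proof splits into two events, each failing with probability at most $\delta$. The first is the success of the call to \algone; the second is the correctness of the stubborn-agent comparisons inside $\Gamma$.

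\emph{First event.} By \Cref{thm:fistalg}, with probability at least $1-1/n\ge 1-\delta$, \algone\ returns a set $\Gamma$ with $|\Gamma|=m\in\Theta_\gamma(\log n)$ and $\top(\lfloor m/3\rfloor)\subseteq\Gamma$. Since $k$ is fixed and $m\ge C_\gamma\log n$, we have $\lfloor m/3\rfloor\ge k$ for $n$ large enough. Hence $\top(k)\subseteq\Gamma$ and $|\Gamma|>k$, so the early return in line~3 does not trigger.

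\emph{Second event.} Condition on $\Gamma$. The stubborn agents' signals are fresh and independent of everything \algone\ observed. Therefore \Cref{lem:herding:stubborn} applies to each of the at most $|\Gamma|^2$ ordered pairs, with per-pair failure $\delta/|\Gamma|^2$, because $\theta=\ln(|\Gamma|^2\delta^{-1})/2\gamma^2$. A union bound shows that with probability at least $1-\delta$ every \compareii\ call returns the true direction. On this event $\hat R(i)=|\{a_j\in\Gamma: a_i\succ a_j\}|$ is exactly the true number of elements of $\Gamma$ beaten by $a_i$. These values are distinct and strictly decreasing in true rank, so sorting by $\hat R$ recovers the true order on $\Gamma$. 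Its top $k$ elements are the $k$ highest-ranked elements of $\Gamma$, which equal $\top(k)$ because $\top(k)\subseteq\Gamma$. Combining the two events gives $\mathbb P(T=\top(k))\ge 1-2\delta$.

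\emph{Stubborn-agent count.} Exactly $|\Gamma|(|\Gamma|-1)\theta$ stubborn agents are used. This is deterministic given $|\Gamma|$, and $|\Gamma|\le L=O_\gamma(\log n)$. So
\[
t_{\mathrm s}=O(\log^2 n)\cdot \frac{\ln(\log^2 n)+\ln\delta^{-1}}{2\gamma^2}=O_\gamma\!\left(\log^2 n\,[\log\log n+\log\delta^{-1}]\right).
\]

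\emph{Regular-agent count.} The only regular agents used are those inside \algone. \Cref{thm:fistalg} gives $O_\gamma(n\log n)$ for these, and that already lies within the claimed $O_\gamma(n\log^2 n)$. I would verify one point carefully. The expected number of regular agents a single \compare\ call needs until the walk hits $\pm2$ is a constant by the gambler's-ruin bound in \Cref{lem:herding:regular}. Summing this constant over the at most $\sum_r m_rD_r$ calls via Wald's identity gives the expectation bound. The looser $\log^2 n$ in the statement leaves slack, for instance for counting each $D_r$ as $O(\log n)$ over $O(\log n)$ rounds without the geometric decay.

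\emph{Main obstacle.} The delicate step is justifying independence in the second event: the adaptively chosen $\Gamma$ must not bias the stubborn signals. This holds because stubborn agents act solely on fresh private signals drawn from $\Ps$ and ignore the herded history, so conditioning on $\Gamma$ leaves their distribution unchanged.
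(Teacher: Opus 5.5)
Your proposal is correct and follows the same route as the paper. It union-bounds the $1/n\le\delta$ failure of \algone\ against a $\delta$-failure union bound over the at most $|\Gamma|^2$ stubborn \compareii\ calls (via \Cref{lem:herding:stubborn}), then counts $|\Gamma|^2\theta$ stubborn agents using $|\Gamma|\le L$.

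Your write-up is in places more careful than the paper's. You use the algorithm's $\theta=\ln(|\Gamma|^2\delta^{-1})/2\gamma^2$, whereas the paper's proof writes $8\gamma^2$, and you account for the regular agents explicitly. One small correction: the lower bound $|\Gamma|=\Omega_\gamma(\log n)$ you need for $\lfloor|\Gamma|/3\rfloor\ge k$ does not come from \Cref{thm:fistalg}, which only gives an upper bound on $|\Gamma|$. It comes from the stopping rule of \algone: every \proc\ call is made on a set of size exceeding $L$ and keeps at least two thirds of it, so $|\Gamma|>2L/3$.
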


\begin{proof}
Let $\Gamma$ be the output of \algone\ and 
$m=\vert \Gamma\vert$, if $2m/3\geq k$, then $\top(k)\subseteq \Gamma$.  Taking $\theta=\ln(m^{2}\cdot \delta^{-1})/8\gamma^2$, from  \cref{lem:herding:stubborn} the probability of any $i,j$ being in fixed in the wrong direction is bounded by $\delta/m^2$.
Taking a union bound over all $m^2$ pairs, the probability that any edge has the wrong direction is bounded by $m^2\cdot \delta/m^2\leq \delta$. 
Therefore, by sorting the elements w.r.t. estimated ranks $\hat{R}$ as calculated in \Cref{alg:star} and outputting the top $k$ elements with probability $1-\delta$ the two sets are equal. This completes the first part of the proof.

\medskip 

Note that 
\compareii\ is called $\Theta(m^2)$ times. 
Furthermore,  $m\leq L\leq  8*320^3/\gamma^5\log n$ (line 6 of \algone), thus, the total number of stubborn agents used by the algorithm is bounded by 
\[
\frac{m^2}{2\gamma^2}\ln(m^2/\delta)\in 
O_\gamma\left(\log^2(n)[\log\log(n)+\log(\delta^{-1})]\right)~.
\] 

Finally, we note that the algorithm is calling \algone\  whose success probability of is $1-1/n$. Since $\delta\geq 1/n$, the failure probability is bounded by
$
\delta+1/n\leq 2\delta 
$.
\end{proof}

\medskip 

We remark that there is a trade-off between the number of stubborn agents queried by \algstar\ and the algorithm's success probability: a success probability of $1 - O(1/n)$ is achievable using $t_{\text{s}} \in \Theta(\log^3 (n))$ stubborn agents, whereas a success probability of $1 - O(1/\log n)$ is achievable using $t_{\text{s}} \in \Theta(\log^2(n) \log \log (n))$ stubborn agents.

\section{Conclusion and Future Work}
In this paper, we study a sequential learning process in which agents' private signals are noisy private comparisons drawn from a true hidden ranking. We demonstrate that while herding poses challenges for inferring pairwise rankings, it is still possible to recover global information about the underlying order, and we develop algorithms to retrieve the highest-ranked elements of the true hidden order across different settings. 

A key modeling assumption in our work is that learning agents, when comparing a given pair, condition their inferences only on the actions of prior agents who evaluated that exact same pair. While this assumption is realistic when agents have limited information or cognitive capacities, relaxing it represents a compelling direction for future work. Specifically, a promising next step is to analyze the dynamics of local and global herding when agents incorporate triadic transitivity into their belief-updating processes.

\section{Acknowledgment}
The authors would like to thank the anonymous reviewers of WINE 2026 for their helpful feedback and suggestions.
We also acknowledge the use of GPT-6 Astra Pro for language proofreading, refining preliminary proofs, and gathering related literature. The authors retain full responsibility for all mathematical claims and final content.
Shahrzad Haddadan was supported by the Rutgers Business School Dean's Seed Funding Award. Jie Gao would like to acknowledge NSF
support through CNS-2515159, IIS-2229876, DMS-2220271, DMS-2311064, CCF-2118953.

\bibliographystyle{splncs04}
\bibliography{ref,learning}
\section{Appendix}

\begin{theorem}[\cite{feller1968}]
\label{thm:app:hitting}
Consider a biased random walk on the integer line which starts at $0$ and takes one step forward w.p. $p>1/2$ and backward  w.p. $1-p$. The walk  stops when hitting $k$ or $-k$. 
The probability that the walk hits $k$ before $-k$ is 

\begin{equation}
    \mathbb{P}\left( k \text{ is hit before }-k\right)=\frac{1}{1+\left(\frac{1-p}{p}\right)^k}~.
\end{equation}

Let $T_k$ be the number of steps this walk takes until it stops. We have:

\begin{equation}
    \mathbb{E}[T_k]=  
    \frac{k}{2p-1}\cdot \frac{1-(1-p/p)^k}{1+(1-p/p)^k}= \Theta\left(\frac{k}{2p-1}\right)~.
\end{equation}

\end{theorem}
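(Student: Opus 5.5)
The plan is to treat the walk $X_t$ as a Markov chain on $\{-k,\dots,k\}$ absorbed at $\pm k$, write $q=1-p$ and $\rho=q/p<1$, and read the factor $(1-p/p)$ in the statement as $\rho$. First I show the stopping time $T_k$ is almost surely finite with a geometric tail. From any interior state, $2k$ consecutive forward steps force absorption, so for every $j$ we have $\mathbb{P}(T_k>2kj)\le (1-p^{2k})^j$. Hence $\mathbb{E}[T_k]<\infty$, and all moments of $T_k$ are finite. This is the only delicate point: it justifies the optional stopping arguments below. I expect it to be the main obstacle, and the geometric-tail bound is what I would use to handle it.

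For the hitting probability, let $h(x)$ be the probability of hitting $k$ before $-k$ when starting from $x$. Conditioning on the first step gives the difference equation $h(x)=p\,h(x+1)+q\,h(x-1)$, with boundary values $h(-k)=0$ and $h(k)=1$. Its characteristic roots are $1$ and $\rho$, so $h(x)=A+B\rho^{x+k}$, and the boundary conditions give $h(x)=(1-\rho^{x+k})/(1-\rho^{2k})$. At $x=0$ this is $(1-\rho^k)/(1-\rho^{2k})=1/(1+\rho^k)$, which is the first claim. Equivalently, one can apply optional stopping to the martingale $\rho^{X_t}$, which is bounded up to time $T_k$.

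For the expected duration, I use the martingale $Z_t=X_t-(p-q)t$. Its increments are bounded and $\mathbb{E}[T_k]<\infty$, so optional stopping applies and gives $(2p-1)\mathbb{E}[T_k]=\mathbb{E}[X_{T_k}]$. Since $X_{T_k}\in\{\pm k\}$,
\[
\mathbb{E}[X_{T_k}]=k\,h(0)-k\,(1-h(0))=k\cdot\frac{1-\rho^k}{1+\rho^k},
\]
which yields the stated exact formula. For the $\Theta(k/(2p-1))$ claim, the factor $(1-\rho^k)/(1+\rho^k)$ is always at most $1$. It is bounded below by a positive constant whenever $p\ge 1/2+\gamma$, because then $\rho\le(1-2\gamma)/(1+2\gamma)<1$. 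This is exactly the regime used in the paper, so the $\Theta$ is meant with constants depending on $\gamma$ and $k$. As $p\to1/2$ the exact formula instead tends to $k^2$.
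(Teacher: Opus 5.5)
Your proof is correct. The paper gives no proof of its own; the statement is simply quoted from Feller, so the only comparison is with the classical gambler's-ruin treatment.

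For the hitting probability you follow that treatment: first-step analysis, characteristic roots $1$ and $\rho=(1-p)/p$. This $\rho$ is the right reading of the garbled $(1-p/p)$ in the statement.

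For the duration your route differs. Feller solves the inhomogeneous recurrence $D_x=1+pD_{x+1}+(1-p)D_{x-1}$ with $D_{\pm k}=0$, using the particular solution $-x/(2p-1)$ plus a homogeneous part $A+B\rho^x$. That gives $\mathbb{E}[T_k]$ from every starting point, but it means solving for two constants again, and it presupposes $D_x<\infty$. Your optional-stopping argument for $X_t-(2p-1)t$ gets the formula in one line by reusing $h(0)$. Your geometric tail bound $\mathbb{P}(T_k>2kj)\le(1-p^{2k})^j$ supplies exactly the integrability that optional stopping needs, and that the recurrence route tacitly needs as well.

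Your caveat on the $\Theta$ claim is correct and worth keeping. As $p\to 1/2$ the exact value tends to $k^2$ while $k/(2p-1)\to\infty$, so the bound holds only for $p\ge 1/2+\gamma$. In that regime the constant needs no dependence on $k$. Since $\rho^k\le\rho$, the factor satisfies $(1-\rho^k)/(1+\rho^k)\ge(1-\rho)/(1+\rho)=2p-1\ge 2\gamma$, so the constant depends on $\gamma$ alone.
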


\bigskip

\mugap*
\begin{proof}
 Consider $a_i\succ a_j\in S$. Let's first divide $S$ to $S_1$, $S_2$ and $S_3$  based on how they compare  $a_i$ and $ a_j$:
\[
S_1=\{a\in S\mid a\prec  a_j\}, 
S_2=\{a\in S\mid  a_j \preceq a\preceq a_i\},
S_3=\{a\in S\mid a_i\prec a\}
\]

Using the SST property we have that:
\begin{equation}\label{eq:sst}
    \forall a_\kappa \in S_1,~ p_{i\kappa}\geq p_{j \kappa} \text{ , and } \forall a_\kappa \in S_3,~  p_{ i \kappa}\geq p_{  j \kappa} 
\end{equation}

Note that since  $a_i\succ a_j$ we have 
$p^*_{ij}-p^*_{ji}\geq 2\gamma$ and for any 
 $a_k$ in between them we have: $p^*_{ik}-p^*_{jk}\geq 2\gamma$. 
Using  \cref{eq:sst} we have that 
$\mu_i(S_1)-\mu_j(S_1)\geq 0 $ and $\mu_i(S_3)-\mu_j(S_3)\geq 0$. Thus putting together we have:

\[
\mu_i(S_2)-\mu_j(S_2)\geq (p^*_{ij}-p^*_{ji})+ \sum_{a_\kappa\in S_2} \frac{p^*_{i\kappa}-p^*_{j\kappa}}{m-1} =2 \gamma \frac{\rank(a_j,S)-\rank(a_i,S)}{m-1}~.
\]
\end{proof}
\setdiflemma
\begin{proof}
Let $S$ be an arbitrary set satisfying the lemma's requirements. 
Note that for large enough $m_r$, $\top(\lfloor \frac{11}{16}m_r\rfloor)\subseteq S$, implies $\top(m_{r+1})\subseteq S$. We now show that  it is unlikely that \proc\ removes  many  of these elements. 
We define two sets:
\[
D=\top(m_{r+1})\setminus S', \quad K=S'\setminus \top(m_{r+1})~,
\]
where $D$ contains all \emph{the elements that we want them but the algorithm Discards them} and $K$ contains \emph{the elements that we don't want them but the algorithm Keeps them}. Note that since $|S'|=m_{r+1}$, we have $|K|=|D|$, assume that the size of both sets are $h$. 
Sort elements of $K$ and $D$ w.r. to their rank in $S$. It is not hard to see that the difference between the rank of the $i$th item from $D$, say $a_j$ and the $i$th item from $K$, say $a_{j'}$ satisfies: 
\[
\rank(a_{j'},S)- \rank(a_j,S)\geq  (m_{r+1}+i)-(m_{r+1}-h+i)\geq h~.\]

Therefore, using \cref{lem:mugap} and summing over $h$ pairs, we have: 
\begin{equation}\label{eq:10}
 \sum_{a_j\in D}\mu_j(S)- \sum_{a_i\in K}\mu_i(S) \geq \frac{2\gamma}{m_r-1} h^2\enspace .
\end{equation}

Note that the elements in $D$ must have a lower normalized  estimated  rank that those $K$ otherwise they won't be deleted by \proc, i.e., 
\begin{equation}\label{eq:11}
    \sum_{a_i\in K}\hat{\mu}_i(S)\geq 
     \sum_{a_j\in D}\hat{\mu}_j(S)~.
\end{equation}

We define:
\[
Er_K=\sum_{a_i\in K}\hat{\mu}_i(S)-\mu_i(S),\quad Er_D=\sum_{a_i\in D}\hat{\mu}_i(S)-\mu_i(S)~. 
\]

Note that if 
\begin{equation}\label{upperlower}
    Er_K< \frac{\gamma}{m_r-1}h^2 \text{ and } Er_D>-\frac{\gamma}{m_r-1}h^2
\end{equation} then from \cref{eq:10} we can conclude that \cref{eq:11} \emph{does not} hold. 
We now show that for $h\geq \gamma/320\cdot m_r$, the probability that \cref{upperlower} holds is large. 

Consider fixed $S$, $M$ and $D$, using \cref{eq:correctionterm} we write:
\[Er_K=
\sum_{a_i\in K}\hat{\mu}_i(S)-\mu_i(S)=
\sum_{a_i\in K} 
\hat{\mu}_i(S)-\mathbb{E}[\hat{\mu}_i(S)]+\sum_{a_i\in K} E_{i}(S)
~,
\]
A similar equation holds for a sum over all $Er_D$. Since we assume $S,D,K$ are all fixed sets, both of the above terms have expected value zero and we can apply Hoeffding bound as follows:

We normalize the first term and bound the empirical average over   $hD_r$ different random variables using Hoeffding bound:
\[
\mathbb{P}
\left(\frac{1}{h}
\sum_{a_i\in K} \left[
\hat{\mu}_i(S)-\mathbb{E}[\hat{\mu}_i(S)]\right]\geq \frac{\gamma}{2(m_r-1)}h
\right)\leq \exp\left(
- 2 \frac{\gamma^2h^3}{4(m_r-1)^2}D_r
\right)~.
\]
Similarly for $D$ we can show:
\[
\mathbb{P}
\left(\frac{1}{h}
\sum_{a_j\in D} 
\hat{\mu}_j(S)-\mathbb{E}[\hat{\mu}_j(S)]\leq -\frac{\gamma}{2(m_r-1)}h
\right)\leq \exp\left(
- 2 \frac{\gamma^2h^3}{4(m_r-1)^2}D_r
\right)~.
\]

In order to bound the second terms of $Er_K$ and $Er_D$ let's write:
\begin{align*}
   (|S|-1) \sum_{a_i\in K} E_i(S)= \sum_{a_i\in K}\sum_{a_{i'}\in S}E_{ii'}= \sum_{a_i\in K}\sum_{a_{i'}\in S\setminus K}E_{ii'}+\sum_{a_i\in K} \sum_{a_{i'}\in K}E_{ii'}
\end{align*}

Note that since $E_{ii'}=-E_{i'i}$ we have:
\begin{align*}
(|S|-1) \sum_{a_i\in K} E_i(S)= \sum_{a_i\in K}\sum_{a_{i'}\in S\setminus K}E_{ii'}
\end{align*}

Therefore, the term 
$(|S|-1)\sum_{a_i\in K}E_i(S)$  is a sum of  $h(m_r-h)$ independent  random variables. Thus:
\begin{align*}
    \mathbb{P}
\left(
\sum_{a_i\in K} 
E_i(S)\geq \frac{\gamma}{2(m_r-1)}h^2
\right)&= \mathbb{P}
\left((|S|-1)
\sum_{a_i\in K} 
E_i(S)\geq \frac{\gamma}{2}h^2
\right)\\
&\leq  \exp\left(
- 2 \frac{\gamma^2h^4}{4}\frac{1}{h(m_r-h)}
\right)\leq 
\exp\left(
- 2 \frac{\gamma^2h^3}{4(m_r-h)}
\right)~.
\end{align*}

Similarly, we can show that for $D$:
\begin{align*}
    \mathbb{P}
\left(
\sum_{a_j\in D} 
E_j(S)\leq -\frac{\gamma}{2(m_r-1)}h^2
\right)\leq 
\exp\left(
- 2 \frac{\gamma^2h^3}{4(m_r-h)}
\right)~.
\end{align*}

 Take  $h\geq \gamma/320\cdot m_r$. Using union bound the probability that \cref{upperlower} does not hold  is bounded by: 
\begin{align*}
       p&= 2\exp\left(
-\frac{\gamma^2}{2m_r^2}\left(\frac{\gamma m_r}{320}\right)^3 D_r
\right)+ 
2\exp\left(
-\frac{\gamma^2}{2(m_r-h)}\left(\frac{\gamma m_r}{320}\right)^3 
\right)~ \\
&\leq  
2\exp\left(
-\frac{\gamma^5}{2*320^3}\cdot m_r C_\gamma \log(m_r\delta^{-1}) 
\right)
+2\exp\left(
-\frac{\gamma^5}{2*320^3}\cdot m^2_r 
\right)
~ \\
&\leq  
2\exp\left(
- 2 m_r \log(m_r\delta^{-1}) 
\right)
+2\exp\left(
-\frac{\gamma^5}{2*320^3}\cdot m^2_r 
\right)
~ \\
\end{align*}

Putting together, for fixed $S, M$, and $D$. The probability that both \cref{eq:10} and \cref{eq:11} hold is less than $p$. Note that $S,D$ and $M$ are not fixed set. However, since $S\subseteq \top(2m_r)$ and so are $D$ and $M$. Therefore, an upper bound for the total number of choices for selecting these sets is $2^{6m_r}$. Taking a union bound over all such sets and plugging in the above bound for $p$  we have: 
\begin{align}\label{secondbound}
      & \mathbb{P}\left(\big|
\top(m_{r+1})\setminus S'\big|\geq \frac{\gamma}{320}\cdot m_r
\right)\leq  2^{6m_r+2}p\nonumber\\
\leq  &
2\left[\exp\left(
(6m_r+2)\ln(2)- 2 m_r \log(m_r\delta^{-1}) 
\right)
+\exp\left((6m_r+2)\ln(2)
-(2/C_\gamma)\cdot m^2_r 
\right)\right]
\end{align}

The first term is bounded by $\delta/m_r$ for large enough $m_r$. We bound the second term  using 
$\delta\geq n^{-2}$, and $m_{r}\geq 4*320^3/\gamma^5
\log n$.
First note that 
\[
\exp\left((6m_r+2)\ln(2)
-(2/C_\gamma)\cdot m^2_r 
\right)\in O\left(\exp\left(-(1/C_\gamma)\cdot m^2_r 
\right)\right)  \]
Using 
$\delta\geq n^{-2}$, and $m_{r}\geq 4*320^3/\gamma^5
\log n$, we have:
\begin{align*}
&\exp\left(-(1/C_\gamma)\cdot m^2_r 
\right)
 \leq   \exp\left(-(1/C_\gamma)\cdot (C_\gamma \log n)^2 
\right) \\
=&  \exp\left(- C_\gamma \log^2 n 
\right)\leq (1/n^2)^{C_\gamma}\leq \delta/m_r~.
\end{align*}
Therefore, the failure probability is at most $4\delta/m_r$, which is the premise.  
\end{proof}

\thmalgtwo* 
\begin{proof}

Let us first bound the probability of failure for \algtwo\. 
Denote $C_\gamma=8*320^3/\gamma^5$, $L_1=C_\gamma\log n$ and $L_2=C_\gamma\log\log n +C_\gamma\log C_\gamma$. 

Note that \algtwo{} has two  stages: first it calls \algone{} which has probability of failure at most $1/n$. Then it keeps calling \proc\ with initial set $|\Gamma|\leq L_1$, and $\delta=1/\log n$. Let $\mathbf{r}$ be the final round of \algtwo.  Inorder to  obtain the proof, we need a new inductive argument which is presented in \cref{inductiveproofsecondthm}. employing it  to $\mathbf{r}$ we have: 
\begin{align*}
    \mathbb{P}\left(
\top(\lfloor\frac{11}{16} m_{\mathbf{r}}\rfloor )\not\subset \Gamma_\mathbf{r}
\right)&\leq \frac{9\delta}{L_1}\sum_{r=0}^\mathbf{r}(3/2)^r& \text{Statement 1 of \cref{inductiveproof} }\\
&\leq \frac{6}{C_\gamma \log n}\frac{\log \log n}{L_2}& \mathbf{r}\leq \log \log n, \delta=1/\log n, m_r\geq L_2\\
&\leq 1/ 2\log n~.
\end{align*}

The proof for bounding the query complexity follows analogously to the proof of \cref{thm:fistalg} using $\mathbf{r}=\log \log n$ and observing that the sample complexity of each iteration of \proc\ is bounded by $\Theta(m_r \log(m_r\log n ))$.

\end{proof}

\begin{theorem}\label{inductiveproofsecondthm}
Let $\Gamma_0=\N$ and for $r\in \{0,1,2,\dots, \mathbf{r}-1\}$ we have: $\Gamma_{r+1}=\proc(\Gamma_r)$. For $r\in \{0,1,\dots, \mathbf{r}\}$ we denote $|\Gamma_r|=m_r$. Given $\delta\geq \log n^{-2}$, and $C_\gamma \log n\geq m_{r}\geq C_\gamma\log\log n +C_\gamma\log C_\gamma$, where $C_\gamma=4*320^3/\gamma^5$. At round $r$ it holds that:
\begin{align*}
 1.\quad & \top(\lfloor\frac{11}{16}m_r\rfloor)\subseteq \Gamma_r\subseteq \top( 2 m_r)~.\\
 2.\quad & \big|  \Gamma_r\setdiff\top(m_r) \big|\leq \frac{\gamma}{80}(m_r-1)~.
\end{align*}
with failure probability at most  
\[
\frac{9\delta}{n}\sum_{i=0}^{r} (\frac{3}{2})^i~.
\]
\end{theorem}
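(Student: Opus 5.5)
We prove the statement by induction on $r$, following the scheme of \Cref{inductiveproof} round by round. The only changes are the parameter regime: $\delta\geq(\log n)^{-2}$, and $m_r$ lies between $C_\gamma\log\log n+C_\gamma\log C_\gamma$ and $C_\gamma\log n$. For the base case $r=0$ we have $\Gamma_0=\N=\top(n)$. Both statements (1) and (2) then hold with probability one, so the failure bound $\frac{9\delta}{n}$ is trivially met. For the inductive step we assume (1) and (2) at round $r$ and derive them at round $r+1$, paying an additional failure probability of at most $9\delta/m_r$.

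\emph{Step 1 (concentration of $\hat{\mu}_i$).} For $S=\Gamma_r$ we split $|\hat{\mu}_i(S)-\mu_i(S)|$ exactly as in \cref{eq:decompose}.
\begin{itemize}
\item The sampling term is handled by Hoeffding with $D_r=C_\gamma\log(m_r/\delta)$. This gives probability $2\delta/m_r^2$ verbatim, because that computation never used the size of $n$.
\item For the correction term we bound $E_i(\top(m_r))$ by Hoeffding, and transfer it to $E_i(S)$ using statement (2). We then take a union bound over $\top(2m_r)$, which is allowed by statement (1).
\end{itemize}
The new check is the analogue of \cref{firstbound}: we need $4m_r\exp(-\gamma^2 m_r/6400)\leq\delta/m_r$. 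Using $m_r\geq C_\gamma\log\log n$, the exponential is at most $(\log n)^{-512/\gamma^3}$. Using $m_r\leq C_\gamma\log n$ and $\delta\geq(\log n)^{-2}$, the target $\delta/m_r^2$ is at least $(\log n)^{-4}C_\gamma^{-2}$. The additive term $C_\gamma\log C_\gamma$ in the lower bound on $m_r$ absorbs the constant $C_\gamma^{2}$ and the factor $4$. This yields $\mathbb{P}(\exists i:|\hat{\mu}_i-\mu_i|\geq\gamma/5)\leq 5\delta/m_r$.

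\emph{Step 2 (analogue of \Cref{lem:setdif}).} The goal is $|\top(m_{r+1})\setminus\Gamma_{r+1}|\leq\frac{\gamma}{320}m_r$ with failure at most $4\delta/m_r$.
\begin{itemize}
\item We reuse the $D/K$ exchange argument with \Cref{lem:mugap} to obtain a gap of order $\gamma h^2/(m_r-1)$.
\item We bound the sampling and correction errors by Hoeffding; for the correction error we use the antisymmetry $E_{ii'}=-E_{i'i}$.
\item We take a union bound over at most $2^{6m_r+2}$ triples $(S,D,K)$ inside $\top(2m_r)$.
\end{itemize}
The first exponential term is $\exp\big((6m_r+2)\ln 2-2m_r\log(m_r/\delta)\big)$. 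It is at most $\delta/m_r$ once $\log(m_r/\delta)$ exceeds a constant, which holds in our range. The second term is $O(\exp(-m_r^2/C_\gamma))$. Here the old argument used $m_r\geq C_\gamma\log n$; instead we use $m_r\geq C_\gamma\log\log n$, which gives $\exp(-C_\gamma(\log\log n)^2)\leq(\log n)^{-C_\gamma\log\log n}$. This is far below $\delta/m_r\geq(\log n)^{-3}/C_\gamma$ for large $n$.

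This re-verification of the tail bounds is the main obstacle. In both steps, every inequality that previously relied on a polynomial-in-$n$ slack has to be redone with only polylogarithmic slack. The lower bound $m_r\geq C_\gamma\log\log n+C_\gamma\log C_\gamma$ is calibrated exactly so that each exponential beats $\delta/m_r^{2}$ with $\delta\geq(\log n)^{-2}$ and $m_r\leq C_\gamma\log n$. I would check the Step 1 bound first, since it is the tightest (exponent linear in $m_r$).

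With Steps 1 and 2 in hand, the derivation of (1) and (2) at round $r+1$ is identical to \Cref{inductiveproof}, since it uses only \Cref{lem:mugap}, \cref{concentratedmean} and \cref{eq:step2}, and no $n$-dependence:
\begin{itemize}
\item Statement (2) follows from \cref{eq:step2} via $|\Gamma_{r+1}\setdiff\top(m_{r+1})|=2|\top(m_{r+1})\setminus\Gamma_{r+1}|$.
\item The inclusion $\top(\lfloor\frac{11}{16}m_{r+1}\rfloor)\subseteq\Gamma_{r+1}$ follows from the $\frac{5}{12}\gamma$ versus $\frac{2}{5}\gamma$ gap.
\item The inclusion $\Gamma_{r+1}\subseteq\top(m_r)\subseteq\top(2m_{r+1})$ follows from the $(1/3-\gamma/160)m_r$ rank gap.
\end{itemize}
The round-$(r+1)$ failure is at most $5\delta/m_r+4\delta/m_r=9\delta/m_r$. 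Since $\Gamma_0=\N$ and each round keeps $m_{r+1}=\lceil 2m_r/3\rceil\geq\frac23 m_r$, we have $m_r\geq(2/3)^r n$, so $9\delta/m_r\leq\frac{9\delta}{n}(3/2)^r$. Adding this to the inductive bound $\frac{9\delta}{n}\sum_{i=0}^{r-1}(3/2)^i$ gives the stated total $\frac{9\delta}{n}\sum_{i=0}^{r}(3/2)^i$.
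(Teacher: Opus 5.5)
Your proposal is correct and is essentially the paper's proof. The paper also just reruns the induction of \Cref{inductiveproof} and redoes only the two tail estimates: the $m_r e^{-\gamma^2 m_r/6400}$ bound in Step 1, and the $2^{6m_r+2}$ union-bound estimate from \Cref{lem:setdif}, both under $\delta\ge(\log n)^{-2}$ and $m_r\ge C_\gamma\log\log n+C_\gamma\log C_\gamma$; you actually carry out the second estimate, which the paper only asserts. One point to tidy: you use $m_r\le C_\gamma\log n$ to lower-bound $\delta/m_r$ and $\delta/m_r^2$, which fails in the early rounds when $\Gamma_0=\N$, but the bound is dispensable because $m^2e^{-\gamma^2 m/6400}$ and $m\,e^{(6m+2)\ln 2-2m^2/C_\gamma}$ are decreasing throughout the allowed range of $m$.
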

\begin{proof}
The proof is very similar to the proof of \cref{inductiveproof}. There is a slight modification in derivation of \cref{firstbound} and \cref{secondbound} which we show here:

For $\delta\geq \log n^{-2}$, and $C_\gamma\log n \geq m_{r}\geq C_\gamma\log\log n +C_\gamma\log C_\gamma$, where $C_\gamma=4*320^3/\gamma^5$, we have: 

\begin{align}\label{firstbound}
   2m_r\exp\left(-\frac{\gamma^2}{6400}\cdot m_r\right)&\leq 2m_r\exp\left(-\frac{\gamma^2}{6400}\cdot \frac{4*320^3}{\gamma^5}\log \log n\right) \nonumber\\
   &\leq 2m_r\exp\left(-\frac{4*512}{\gamma^3}\log \log n\right) \nonumber
   \\
   &\leq 2m_r\exp\left(-256\log \log n\right) \leq \frac{2m_r}{(\log n) ^{256}}\leq \delta/m_r~.
\end{align}

Furthermore, for $\delta\geq \log n^{-2}$, and $m_{r}\geq C_\gamma\log\log n +C_\gamma\log C_\gamma$, where $C_\gamma=4*320^3/\gamma^5$, we have: 
\begin{align*}
&2\left[\exp\left(
(6m_r+2)\ln(2)- 2 m_r \log(m_r\delta^{-1}) 
\right)
+\exp\left((6m_r+2)\ln(2)
-(2/C_\gamma)\cdot m^2_r 
\right)\right]\\
&\leq 4\delta/m_r~.
\end{align*}
\end{proof}

\lowerboundconstant*
\begin{proof}
The proof is similar to the proof of previous lemma with the following modification:

Define $\sigma_1=a_1\succ a_2\succ \dots \succ a_k \succ a_{k+1}\succ\dots\succ a_n$. Let $L={{2k}\choose{k}}$. 
Consider all subsets of $a_1,a_2,\dots, a_{2k}$ and order them arbitrarily. 
For $j=2,3,\dots, L$ define: $\sigma_j$ be the permutation where the subset with index $j$ is placed at $\top_{\sigma_j}(k)$ preserving the order in $\sigma_1$ and all other elements are shifted right to preserve their relative order. Note that for any such $\sigma_j$ the number of pairwise disagreements of $\sigma_1$ and $\sigma_j$ at at most $k^2$. Furthermore, if the size of output $S$ is $k$, at least one of $\sigma_j$s will satisfy $S\cap \top(\sigma_j)=\emptyset $. The rest of the proof following similarly to the proof of \cref{lem:lowerbound1}, and we can show that there is at least one $j$ such that 

\begin{align*}
\mathbb{P}_{\sigma_j}\big(\top_{\sigma_j}(k)\cap S=\emptyset\big)\geq    \big(\frac{1/2-\gamma}{1/2+\gamma}\big)^{2\cdot (k^2)}/{{2k}\choose{k}}~.
\end{align*}

The proofs concludes from setting $c_{\gamma,k}=\big(\frac{1/2-\gamma}{1/2+\gamma}\big)^{2\cdot(k^2)}/{{2k}\choose{k}}$. 
\end{proof}

\end{document}